\newtheorem{assumption}{Assumption}
\newcommand{\slcurly}[1]{$\left\{\vcenter{\hbox{$\shortstack{#1}$}}\right.$}
\newtheorem{lemma}{\textbf{Lemma}}
\newtheorem{proposition}{\textbf{Proposition}}
\theoremstyle{remark}
\newtheorem{remark}{\textbf{Remark}}
\begin{document}
	\title{Decentralized Federated Learning via MIMO Over-the-Air Computation: Consensus Analysis and Performance Optimization}
	\author{
		Zhiyuan~Zhai, 
		Xiaojun~Yuan,~\IEEEmembership{Senior Member, IEEE}, 
		and~Xin~Wang,~\IEEEmembership{Fellow, IEEE}
		%		~Geoffrey~Ye~Li,~\IEEEmembership{Fellow, IEEE} 
	}
	
	\maketitle
	
	\vspace{-2.6em}
	\begin{abstract}
		\vspace{-.3em}
		Decentralized federated learning (DFL), inherited from distributed optimization, is an emerging paradigm to leverage the explosively growing data from wireless devices in a fully distributed manner. With the cooperation of edge devices, DFL enables joint training of machine learning model under device to device (D2D) communication fashion without the coordination of a parameter server. However, the deployment of wireless DFL is facing some pivotal challenges. Communication is a critical bottleneck due to the required extensive message exchange between neighbor devices to share the learned model. Besides, consensus becomes increasingly difficult as the number of devices grows because there is no available central server to perform coordination. To overcome these difficulties, this paper proposes employing over-the-air computation (Aircomp) to improve communication efficiency by exploiting the superposition property of analog waveform in multi-access channels, and introduce the mixing matrix mechanism to promote consensus using the spectral property of symmetric doubly stochastic matrix. Specifically, we develop a novel multiple-input multiple-output over-the-air DFL (MIMO  OA-DFL) framework to study over-the-air DFL problem over MIMO multiple access channels. We conduct a general convergence analysis to quantitatively capture the influence of aggregation weight and communication error on the MIMO  OA-DFL performance in  \emph{ad hoc} networks. The result shows that the communication error together with the spectral gap of mixing matrix has a significant impact on the learning performance. Based on this, a joint communication-learning optimization problem is formulated to optimize transceiver beamformers and mixing matrix. Extensive numerical experiments are performed to reveal the characteristics of different topologies and demonstrate the substantial learning performance enhancement of our proposed algorithm.
	\end{abstract}

	\begin{IEEEkeywords}
		Decentralized federated learning, multiple-input multiple-output multiple access channel, over-the-air model aggregation, consensus problem,  alternating optimization.
	\end{IEEEkeywords}

	\section{Introduction}	 
	Empowered by an unprecedented increase in local data generated by mobile edge devices, there is a surging trend in developing deep learning applications at the edge of wireless networks. These applications encompass various domains, including image recognition \cite{he2016deep} and natural language processing \cite{young2018recent}.
	However, primarily due to the requirement of collecting distributed data for centralized training, traditional machine learning (ML) approaches face limitations in terms of communication bandwidth and potential privacy concerns. 
	Federated learning (FL) is a distributed machine learning paradigm that has the ability to address these challenges \cite{konevcny2016federated}. 
	FL enables participating mobile devices to train a global learning model with the coordination of a parameter server (PS). In this approach, each device computes local model updates, such as model parameters or gradients, by utilizing its local datasets. These updates are then uploaded to the PS, where the averaged model is computed and subsequently broadcasted to the devices.
	
	One significant limitation of FL is its heavy reliance on the central PS. FL requires aggregating all device updates at the PS, resulting in communication congestion and reduced fault tolerance. This bottleneck makes it challenging for FL to handle a massive number of devices efficiently. 
	Moreover, in certain application scenarios like autonomous robotics and collaborative driving\cite{savazzi2021opportunities}, centralized FL may not be reliable due to the absence of an available central PS.
	To address these drawbacks, decentralized federated learning (DFL) has emerged as a promising alternative. DFL eliminates the need for coordination from a central PS by enabling each device to maintain and optimize its local model. Model exchange is achieved through device-to-device (D2D) communications. 
	The concept of decentralized learning/optimization traces back to the 1980s\cite{tsitsiklis1984problems}, with algorithms like the alternating direction method of multipliers (ADMM)\cite{wei2012distributed}, dual averaging \cite{duchi2011dual}, and gradient descent \cite{nedic2009distributed} being well-known in this field.
	More recently, decentralized stochastic gradient descent (DSGD) \cite{ram2008distributed}, \cite{lian2017can} has gained attention as a novel algorithm for large-scale deep learning problems. DSGD ensures convergence to optimality under the assumptions on convexity, gradient, and network connectivity. This framework has been extended to accommodate various network paradigms and enhance convergence rates. For example, in \cite{basu2019qsparse}, the authors propose a scheme involving joint quantization, aggressive sparsification, and local computations to alleviate communication overhead. Additionally, \cite{koloskova2020unified} presents a comprehensive convergence analysis that encompasses local SGD updates, synchronous updates, and pairwise gossip processes on changing topologies.

	Despite the promising potential of DFL, most of the existing works suppose error free communication links between devices while the real-world communication systems are prone to distortions.  Imperfect communication conditions, including limited wireless resources, channel fading, noise, and mutual interference, can result in inaccurate model exchanges, thus hindering training performance.  Additionally, transmitting model parameters through D2D communications can introduce significant communication overhead, which limits the scalability of DFL \cite{ye2022decentralized}.  
	To tackle these challenges, several recent works have focused
	on the communication aspect of DFL and proposed over-the-air computation (Aircomp) \cite{nazer2007computation} to improve
	the communication efficiency in the aggregation process.
	Aircomp leverages the superposition property of electromagnetic waves, enabling edge devices to transmit their model parameters simultaneously using shared radio resources. The signal is then aggregated in the wireless channel, allowing the receiver to obtain an approximation of the desired aggregated value.
	For instance, \cite{xing2020decentralized} uses a heuristic greedy coloring algorithm to arrange the communication order and enable devices to perform computational over-the-air sequentially in successive slots under D2D networks. Similarly, \cite{shi2021over} separates the communication process into scheduling and transmission parts and schedules the selected device as the active central server to enable interference-free over-the-air transmission. The authors in \cite{lin2022distributed} propose a one-step over-the-air scheme where all devices exchange model parameters in a single phase via full-duplex (FD) communication to accelerate the training speed.
	
	Nevertheless, these recent works have their limitations. Particularly,  
	\cite{xing2020decentralized} and \cite{shi2021over} determine the mixing matrix based on standard examples, which may not be suitable for specific DFL systems or changing wireless conditions. 
	Moreover, the heuristic protocols employed in their system designs do not guarantee the optimality of DFL performance.
	Although \cite{lin2022distributed} has evidenced the effectiveness of the over-the-air technique in improving DFL model aggregation performance, their work only focuses on beamforming optimization in fully connected topology. Hence, the lack of consideration for learning aspects and various network topologies limits the full potential release of DFL systems. Therefore, there is a pressing need to conduct theoretical analysis and performance optimization to address general DFL scenarios from a joint communication-learning perspective.

	In this paper, we present a novel multiple-input multiple-output over-the-air decentralized federated learning (MIMO OA-DFL) scheme.  To fully harness the potential of wireless DFL performance, we develop a general communication-learning framework for the considered MIMO OA-DFL system. Furthermore, we conduct convergence analysis to characterize the impact of mixing matrix and communication error on the DFL learning accuracy under moderate assumptions. Based on this analysis, we propose a low-complexity algorithm that utilizes alternating optimization (AO) to jointly optimize the mixing matrix and transceiver beamformers. We summarize our contributions as follows.
	\begin{itemize}
		\item  We investigate the DFL problem in general \emph{ad hoc} networks and establish a joint communication and learning framework for the considered MIMO OA-DFL scheme. In this framework, we introduce mixing matrix mechanism to guarantee consensus together with beamforming design to improve communication quality.
		\item  We derive a rigorous convergence bound for the global loss function. This bound is obtained by utilizing the symmetric doubly stochastic character of mixing matrix and the statistical properties of communication errors. To the best of our knowledge, our derivation is the first analysis on the convergence of decentralized learning/optimization in the presence of communication error and is applicable to arbitrary topologies.
		Based on our convergence analysis, we formulate the communication (beamformers) and learning (mixing matrix) joint optimization problem to enhance MIMO OA-DFL performance.
		\item We propose an efficient AO algorithm \cite{corne1999new} to obtain the solution of transceiver beamformers and mixing matrix. Particularly, we transform the optimization of multicast beamforming into a convex quadratically constrained quadratic programming (QCQP) problem and determine the mixing matrix using monotonicity of the objective function and variational characterization of optimization variables\cite{boyd_vandenberghe_2004}.
	\end{itemize}
	
	Simulation results demonstrate the effectiveness of the proposed scheme and shed light on the characteristics of different topologies in MIMO OA-DFL. Specifically, our numerical results on the error-free case validate the precision of the derived convergence bound. 
	We also conduct an in-depth analysis of the trade-off between communication and learning by analyzing the performance differences among various topologies.
	Furthermore, the comparisons with benchmark methods show that our scheme achieves significant performance improvements and near-optimal learning accuracy.
	
	The remainder of this paper is organized as follows. In Section II, we provide details of the DFL learning and communication models. Section III introduces the proposed MIMO OA-DFL framework. Section IV presents the preliminary assumptions and analyzes the convergence of MIMO OA-DFL. In Section V, we formulate the performance optimization problem that minimizes the global training loss and propose algorithms to jointly optimize the beamformers and mixing matrix. Section VI presents the simulation results, and we conclude the paper with remarks in Section VII.

	\textit{Notations:}
	We use the set notation $[M]$ to denote the set $\{i|1\leq i \leq M\}$, and denote the real and complex number sets by $\mathbb{R}$ and $\mathbb{C}$, respectively. The regular letters, lowercase letters in bold, and bold capital letters are used to denote scalars, vectors and matrices, respectively. We use $(\cdot)^\ast$, $(\cdot)^\mathrm{T}$,  $(\cdot)^\mathrm{H}$, and $(\cdot)^{\dagger}$ to denote the conjugate, the transpose, the conjugate transpose, and the pseudoinverse, respectively. We use $x[i]$ to denote the $i$-th entry of vector $\xx$, $x_{ij}$ to denote the $(i,j)$-th entry of matrix $\mX$, $\mathcal{CN}(\mu,\sigma^2)$ to denote circularly-symmetric complex normal distribution with mean $\mu$ and covariance $\sigma^2$. The $l_2$-norm is denoted by $\norm{\cdot}$, while the Frobenius norm is denoted by $\norm{\cdot}_F$.  The expectation operator is represented by $\E$. We use $\1_n$ to denote the column vector in $\mathbb{R}^n$ with all elements being 1, and $\1$ to denote such a vector with the appropriate dimension. We use $\Tr(\cdot)$ to denote the trace of a square matrix. The identity matrix is denoted by $\mathbf{I}$, while $\lambda_i(\cdot)$ denotes the $i$-th largest eigenvalue of a matrix. We use $\nabla f(\cdot)$ to denote the gradient of a function $f$, and $\partial F(\cdot)$ to denote the concatenation of all gradients of the devices.
	
	\section{Learning and Communication Models} \label{section-system-model}
	In this section, we discuss the DFL process and present the underlying communication channel to support data exchanges involved in the DFL process.
				\vspace{-1em}
	\subsection{Decentralized Federated Learning}\label{sec2a}
	We begin with a description of the DFL system where $M$ devices cooperatively train a machine learning model. The common objective of the $M$ devices is to minimize an empirical loss function 
	\begin{align}
		\  f({\bf{x}} )=\frac{1}{M} \sum_{i=1}^M f_i({\bf{x}} ), \  \label{eq:f}
	\end{align} 
	where ${\bf{x}}\in {\mathbb{R}}^D$ is the  model parameter with dimension $D$,
	and $f_i \colon {\mathbb{R}}^D \to {\mathbb{R}}$  is the local loss function of device $i$ defined by
				\vspace{-1em}
	\begin{align}\label{eq:f_m}
		f_i({\bf{x}}) := {\mathbb{E}}_{\xi_i \sim {\mathcal D}_i} F({\bf{x}},\xi_i),
	\end{align}
	with ${\mathcal D}_i$ being the predefined distribution of local data samples on device $i$, and $F({\bf{x}},\xi_i)$ being the loss function with respect to samples $\xi_i$. 
	\begin{figure}[htbp]
		\centering
		\includegraphics[width=0.6\textwidth]{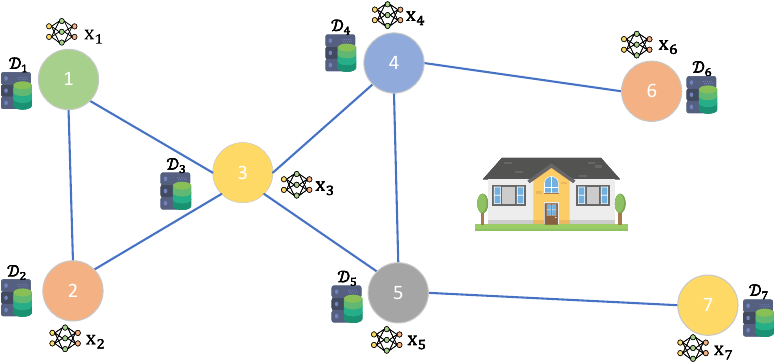}
		\caption{An example of the DFL system with seven devices.}
		\label{fig_system_model}
					\vspace{-1em}
	\end{figure}

	The devices update their local models  by minimizing their individual local loss functions, and then exchange learned model parameters via communication links
	to promote decentralized training. 
	Let $e_{ij}$ be an indicator function of the communication link between device $i$ and device $j$. That is, $e_{ij}=1$ if the communication link between device $i$ and device $j$ exists, and $e_{ij}=0$ otherwise. We assume full-duplex communication, i.e., $e_{ij}=e_{ji}$. Then, the
	communication topology for model exchanges can be represented by an undirected graph $\mathcal{G}=\left(\mathcal{V},\mathcal{E}\right)$, where $\mathcal{V}$ represents the device set and $\mathcal{E}$ represents the set of all communication links, i.e., $\mathcal{E}=\{e_{ij}| e_{ij}=1, \forall i,j\}$. 
	We say that device $i$ is a neighbor of device $j$ if $e_{ij}=1$. An example of $\mathcal{G}$ is shown in Fig.~\ref{fig_system_model}. We assume that the communication topology remains unchanged during the whole training process.

	We now describe the training procedure of the DFL system. Specifically, we adopt the stochastic gradient descent method\cite{zinkevich2010parallelized} for local training, where the model parameters of all devices are iteratively updated at each training round.  At the 
	$t$-th round, the training process consists of the following three steps:
	\begin{itemize}
		\item \textit{Local gradient computation}: Each device $i$
		computes the local stochastic gradient $ \nabla F(\xx_i^{(t)}, \xi_i^{(t)})$ by randomly sampling $\xi_i^{(t)}$ in local training dataset ${\mathcal D}_i$, where $\xx_i^{(t)}$ denotes the model parameter of device $i$ in round $t$.
		\item \textit{Gossip model aggregation}: Devices communicate with their neighbors to exchange model parameters. Each device fetches the model parameters from its neighbors through wireless channels. Based on the received signals, each device estimates the weighted average as
		\begin{align}\label{free_aggre}
			\xx_{i}^{(t + \frac{1}{2})} = \sum_{j=1}^{M} {w_{ij}}\xx_j^{(t)},~ \forall i \in [M]
		\end{align}
		where ${w_{ij}\in [0,1]}$ is  the weighting factor for device $j$ aggregating on device $i$. Note that $w_{ij}=0$ if device $i$ does not have a communication link with device $j$.  We refer to  
		$\xx_{i}^{(t + \frac{1}{2})}$ as the ideal (error-free) aggregation model at device $i$,  and denote by $\hat{\xx}_{i}^{(t+\frac{1}{2})}$ an estimate of $\xx_{i}^{(t + \frac{1}{2})}$. Due to the presence of communication noise and channel fading, the estimate $\hat{\xx}_{i}^{(t+\frac{1}{2})}$ generally contains distortion, i.e., $\hat{\xx}_{i}^{(t+\frac{1}{2})} \neq \xx_{i}^{(t + \frac{1}{2})},\forall i \in [M]$.
		\item \textit{Local model update}: Based on the estimate $\hat{\xx}_{i}^{(t + \frac{1}{2})},\forall i \in [M]$, each device updates the local model parameter as
					\vspace{-1em}
		\begin{align}\label{update}
			\xx_{i}^{(t+1)} = \hat{\xx}_{i}^{(t+\frac{1}{2})} - \lambda \nabla F(\xx_i^{(t)}, \xi_i^{(t)}),~\forall i \in [M],
		\end{align}
		where $\lambda \in \R$ represents the learning rate.
	\end{itemize}
	
	The weighting factor of all devices can be captured by a mixing matrix, also known as gossip matrix  \cite{koloskova2020unified},  denoted by $\mathbf{W} \in \mathbb{R}^{M\times M}$, with ${w_{ij}}$ being the $(i,j)$-th element.
	To guarantee consensus, the matrix $\mathbf{W}$ is constrained to be a symmetric doubly stochastic matrix \cite{koloskova2019decentralized}. It is known that such a mixing matrix exists for every connected graph.

		\vspace{-1em}
	\subsection{MIMO IBFD Communication Channel}
	
	In each communication round, the learned model parameters of the devices are exchanged via wireless communication links as specified by $\mathcal{G}$.
	Each device is equipped with ${N_\text{T}}$ transmit antennas and ${N_\text{R}}$ receive antennas for full-duplex communication, yielding a  multiple-input multiple-output (MIMO) in-band full-duplex (IBFD) \emph{ad hoc} network with topology $\mathcal{G}$.\footnote{Here we consider IBFD communications where the exchange of model parameters between devices can be realized simultaneously.
		Our proposed scheme, as well as the subsequent analysis, can be readily extended to the half-duplex scenario by assuming that each device sequentially acts as a central server to perform over-the-air aggregation in a time-division fashion.
	}
	We further assume that the transmit and receive antennas for each device are well isolated, where the residual self-interference can be efficiently suppressed by using
	the self-interference cancellation (SIC) technique\cite{sabharwal2014band}.

	In each communication round $t$, each device broadcasts its local model parameter via multicast beamforming, and simultaneously receives the learned models from the neighbor devices. We assume a block-fading channel, i.e., the channel coefficients keep invariant within each communication round.  The received signal of each device at the $l$-th channel use, denoted by $\mathbf{y}_{i}^{(t)}[l] \in\C^{{N_\text{R}}} $, is given by
				\vspace{-1em}
	\begin{align}
		\mathbf{y}_{i}^{(t)}[l]=\sum_{j\in\cM_i}^{}\mathbf{H}_{\langle i,j\rangle}^{(t)}\mathbf{s}_{j}^{(t)}[l]+\mathbf{n}_{i}^{(t)}[l],~ \forall i \in [M],\label{channel_oneuse}
	\end{align}
	where $\cM_i$ denotes the neighbor set of device $i$, $\mathbf{H}_{\langle i,j\rangle}^{(t)}\in \C^{{N_\text{R}}\times {N_\text{T}}}$ denotes the channel matrix between the $i$-th device and the $j$-th device,  $\mathbf{s}_{j}^{(t)}[l] \in\C^{{N_\text{T}}}$ denotes the transmit signal of user $j$ in the $l$-th channel use, and $\mathbf{n}_{i}^{(t)}[l] \in\C^{{N_\text{R}}}$ is an additive white Gaussian noise (AWGN) vector with each element following the distribution $\mathcal{CN}(0,\sigma_n^2)$.
	Let $L$ be the number of channels used in each communication round. Then, the received signal matrix of each device  can be expressed as
				\vspace{-1em}
	\begin{align}
		\mathbf{Y}_{i}^{(t)}=\sum_{j\in\cM_i}^{}\mathbf{H}_{\langle i,j\rangle}^{(t)}\mathbf{S}_{j}^{(t)}+\mathbf{N}_{i}^{(t)},~ \forall i \in [M],\label{channel}
	\end{align}
	where $\mathbf{Y}_{i}^{(t)}\triangleq\left[\mathbf{y}_{i}^{(t)}[1],\cdots,\mathbf{y}_{i}^{(t)}[L]\right]\in \C^{{N_\text{R}}\times L}$,  $\mathbf{S}_{j}^{(t)}\triangleq\left[\mathbf{s}_{j}^{(t)}[1],\cdots,\mathbf{s}_{j}^{(t)}[L]\right] \in \C^{{N_\text{T}}\times L} $ 
	and  $\mathbf{N}_{i}^{(t)}\triangleq\left[\right.\nn_{i}^{(t)}[1]\cdots,\nn_{i}^{(t)}[L]\left.\right]\in \C^{{N_\text{R}}\times L}$.
	We assume that the global channel state information (CSI) is available. In practice, CSI can be obtained by using conventional channel estimation techniques and exploiting channel reciprocity and/or effective  feedback\cite{zhu2018mimo}, \cite{wen2014channel}.
	\section{Proposed MIMO OA-DFL Framework}
	In this section, we illustrate the proposed MIMO OA-DFL framework. Specifically, in each training round, each device computes the local gradient and then performs gossip model aggregation over the channel given in \eqref{channel} based on over-the-air computation. After that, each device updates its local model according to \eqref{update}. In the following, we focus on the over-the-air aggregation process.

	To begin with, in over-the-air aggregation,  each device needs to simultaneously broadcast its local model parameter using the same frequency resource via multicast beamforming and analog domain modulation. By cooperatively controlling the multicast transmit and receive beamformers, the expected aggregation signal can be coherently recovered at each device\footnote{In a decentralized (\emph{ad hoc}) system, 
		to guarantee the synchronization of arriving signal,  all the devices need to be synchronized by a unified clock\cite{romer2001time}. As an example, the cyclic prefix (CP) technique, originally used in orthogonal frequency-division multiplexing (OFDM) systems, can be exploited for signal synchronization\cite{sandell1995timing}. }. To be specific,
	at an arbitrary communication round, the following procedure is concurrently executed on every device.   
	We first normalize the model parameter $\xx_i^{(t)}$ as
	\begin{align}\label{nomalize}
		\tilde{\xx}_{i}^{(t)}={\left(\xx_i^{(t)}-\bar{x}_{i}^{(t)}\1_D\right)}/\sqrt{{v_{i}^{(t)}}},~\forall i \in [M]
	\end{align}
	where $ \bar{x}_{i}^{(t)}=\frac{1}{D}\sum_{d=1}^{D}{x}_{i}^{(t)}[d]$ and ${v_{i}^{(t)}}=\frac{1}{D}\sum_{d=1}^{D}\left({x}_{i}^{(t)}[d]-\bar{x}_{i}^{(t)}\right)^2$ are the mean and variance of ${\xx}_{i}^{(t)}$, respectively. By following the common practice, e.g., in \cite{liu2021reconfigurable} and \cite{lin2021deploying}, the mean and variance are exchanged between the neighbors via error-free links. In this normalization process, the model parameter $\xx_i^{(t)}$ is transformed into a zero-mean and unit-variance signal $\tilde{\xx}_{i}^{(t)}$. 
	Then, we  convert the normalized model vector
	$\tilde{\xx}_{i}^{(t)} \in \R^D$  to a complex version $\rr_{i}^{(t)} \in \C^L$ 
	\begin{align}\label{complex1}
		\rr_{i}^{(t)}= \tilde{\xx}_{i}^{(t)}\left(1:\frac{D}{2}\right)+\text{j}\tilde{\xx}_{i}^{(t)}\left(\frac{D+2}{2}:D\right),~\forall i \in [M],
	\end{align}
	where we choose the block length $L=D/2$ for simplicity. Let $\uu_{i}^{(t)} \in \C^{{N_\text{T}}}$ be the multicast beamforming vector.  The transmit signal of the $i$-th device, denoted by $\mathbf{S}_{i}^{(t)}$, can be expressed as
				\vspace{-1em}
	\begin{align}
		\mathbf{S}_{i}^{(t)}\triangleq\uu_{i}^{(t)}({\rr_{i}^{(t)}})^{\mathrm{T}}\in \C^{{N_\text{T}}\times L},\label{trans_signal}
	\end{align}
	and the corresponding the power constraint is $~\E\norm{\mathbf{S}_{i}^{(t)}[l]}^2=2\norm{\uu_{i}^{(t)}}^2\leq P_0, \forall i \in [M]$, where $P_0$ denotes the maximum transmit power for each device and $\mathbf{S}_{i}^{(t)}[l]\triangleq r_{i}^{(t)}[l]\uu^{(t)}_j \in \C^{N_\text{T}}$  is the transmit signal of device $i$ at the $l$-th channel use. Then, each device broadcasts the signal $\mathbf{S}_{i}^{(t)}$ through the channel given in \eqref{channel} to its neighbors. The received signal of each device can be expressed as
	\begin{align}\label{wholechannel}
		\hat{\rr}_{i}^{(t)}=\Big(({\ff_{i}^{(t)}})^{\mathrm{H}}\mathbf{Y}_{i}^{(t)}\Big)^{\mathrm{T}}=\Big(\sum_{j\in\cM_i}\rr_{j}^{(t)}(\mathbf{H}_{\langle i,j\rangle}^{(t)}\uu_{j}^{(t)})^{\mathrm{T}}+\mathbf{N}_{k,i}^{\mathrm{T}}\Big)({\ff_{i}^{(t)}})^{\ast},~\forall i \in [M]
	\end{align}
	where $\ff_{i}^{(t)}\in \C^{{N_\text{R}}}$ represents the receive beamforming (combining) vector used to retrieve the desired signal. Then, each device computes the estimate of $\xx_{i}^{(t + \frac{1}{2})}$  from $\hat{\rr}_{i}^{(t)}$ by  
	\begin{align}\label{rece1}
		\hat{\xx}_{i}^{(t+\frac{1}{2})}=\left[\mathrm{Re}\{\hat{\rr}_{i}^{(t)}\}^{\mathrm{T}},~\mathrm{Im}\{\hat{\rr}_{i}^{(t)}\}^{\mathrm{T}}\right]^{\mathrm{T}}+\tilde{x}_{i}^{(t)}\1_{D}+w_{i,i}\xx_i^{(t)},~\forall i \in [M]
	\end{align}
	where $\tilde{x}_{i}^{(t)}\triangleq\sum_{j\in \cM_i}{w_{ij}}\bar{x}_{j}^{(t)}$. Note that  the  term $\tilde{x}_{i}^{(t)}\1_{D}$ is added back to compensate the mean of $\xx_i^{(t)}$  subtracted in the normalization step \eqref{nomalize},
	and the term  $w_{i,i}\xx_i^{(t)}$ represents the contribution of local model $\xx_i^{(t)}$ to the model aggregation.
	
	With the collected received signal $\hat{\xx}_{i}^{(t+\frac{1}{2})}$, each device updates the local model based on \eqref{update}.
	We summarize the overall MIMO OA-DFL scheme in Algorithm \ref{alg:FL_framework}, where $\ff^{(t)} \triangleq \{\ff_{i}^{(t)}\}_{i=1}^M$ and $\uu^{(t)} \triangleq \{\uu_{i}^{(t)}\}_{i=1}^M$ are introduced for notational brevity.
	\begin{algorithm}[htb]
		\caption{MIMO OA-DFL scheme} 
		\label{alg:FL_framework} 
		\begin{algorithmic}[1] 
			\REQUIRE Training round $T$, data distribution $\{{\mathcal D}_i\}_{i=1}^{M}$. %算法的输入参数：Input
			\STATE {\textbf{Initialization:}} $t = 0$, the initial model $\{\mathbf{x}^{(0)}\}$ on the each device.
			\FOR{ $t \in [T]$ }
			\STATE Devices obtain the CSI and optimize $(\mathbf{W}, \ff^{(t)}, \uu^{(t)})$;
			\STATE Each device exchanges the mean  $\{\bar{x}^{(t)}\}^{M}_{i=1}$ and variance $\{v^{(t)}\}_{i=1}^{M}$ with their neighbors via error-free links;
			\FOR{$i \in [M]$ in parallel}
			\STATE  Device $i$ computes its local gradient  $ \nabla F(\xx_i^{(t)}, \xi_i^{(t)})$ by randomly sampling $\xi_i^{(t)}$ in local dataset;
			\STATE  Device $i$ broadcasts its local model $\{\xx_i^{(t)}\}$ to the neighbor devices via \eqref{nomalize}-\eqref{trans_signal};
			\STATE Device $i$ recovers the aggregated model  $\{\xx_{i}^{(t + \frac{1}{2})}\}$  based on \eqref{wholechannel} and \eqref{rece1};
			\STATE Device $i$  updates the local model $\{\xx_{i}^{(t+1)}\}$ based on \eqref{update};
			\ENDFOR
			\ENDFOR 
		\end{algorithmic}
	\end{algorithm}
	
	In the proposed MIMO OA-DFL scheme, model consensus is accomplished via D2D  communications. The existence of communication errors makes the learned model inaccurate and
	even compromises the consensus performance of MIMO OA-DFL. This poses a great challenge for the system design. In the next section, we analyze the convergence of MIMO OA-DFL and study
	the impact of the mixing matrix $\mathbf{W}$ and the beamformers $\ff^{(t)} $ and $\uu^{(t)}$ on the performance of MIMO OA-DFL.

	\section{Convergence Analysis}\label{sec_ca} \label{sec-performance-analysis}
	
	\subsection{Assumptions}
	To begin with, we make the following assumptions.
	\begin{assumption}\label{as1}
		(\rm{Gossip matrix}). 
		The mixing matrix $\mathbf{W}$ is a symmetric doubly stochastic matrix, i.e., $\mathbf{W}^{\mathrm{T}}=\mathbf{W}$, $ \mathbf{W}\1=\1$, $\1^{\mathrm{T}}\mathbf{W}=\1^{\mathrm{T}}$ and $\mathbf{W} \in [0,1]^{M\times M}$.
		We define	${\delta(\mathbf{W})} \triangleq (\max\{|\lambda_2(\mathbf{W})|, |\lambda_M(\mathbf{W})|\})^2$ and assume ${\delta(\mathbf{W})} < 1$.
	\end{assumption}
	
	\begin{assumption}\label{as2}
		($\omega$-\rm{smoothness}). The functions $f_1,\dots,f_M$ are all differentiable and the corresponding gradients $\nabla f_1(\cdot),\dots,f_M(\cdot)$ are Lipschitz continuous with parameter $\omega$, i.e.,
		\begin{align}
			\norm{\nabla f_i(\xx)-\nabla f_i(\yy)}\leq \omega \norm{\xx-\yy}, \forall \xx,\yy \in \R^{D},\forall i \in [M].
		\end{align}
	\end{assumption}
	
	\begin{assumption}\label{as3}
		(\rm{Bounded variance}). The variance of the stochastic gradient $\E{\norm{\nabla F(\xx, \xi_i) - \nabla f_i(\xx)}}^2$ and $	\E{\norm{\nabla f_i(\xx) - \nabla f(\xx)}}^2$ are bounded, i.e.,
		\begin{align}
			\E_{\xi_i\sim{\mathcal D}_i}{\norm{\nabla F(\xx, \xi_i) - \nabla f_i(\xx)}}^2 &\leq \alpha^2\,, \forall\xx \in \R^D\,,\forall i \in [M],\label{ass31}\\
			\E_{i\sim[M]}{\norm{\nabla f_i(\xx) - \nabla f(\xx)}}^2 &\leq  \beta^2\,,   \forall \xx \in \R^D.\label{ass32}
		\end{align}
		where $\alpha^2$  denotes the bound of the variance of stochastic gradients at each device, and $\beta^2$ denotes the bound of discrepancy of data
		distributions at different devices.
	\end{assumption}
	
	Assumptions 1-3 are commonly used in the literature on decentralized stochastic optimization and gossip algorithm; see, e.g., \cite{koloskova2019decentralized},\cite{wang2018cooperative},\cite{li2021decentralized}. Assumption 1 is related to the mixing matrix. Note that for a doubly stochastic matrix, we always have $\lambda_1(\mathbf{W})= 1$ and $|\lambda_i(\mathbf{W})|\leq 1,\forall i$.  Assumption 1 states that $\lambda_i(\mathbf{W})$ is strictly less than $1$ for $i\neq 1$. Later we see that ${\delta(\mathbf{W})}$  is related to the consensus performance in the decentralized network. 
	Assumption 2  is related to the Lipschitz continuity of the loss function.
	Assumption 3 ensures a bounded gap between the gradient of the local sample-dependent loss, i.e., $\nabla F(\xx, \xi_i)$, and that of the overall loss, i.e., $\nabla f(\xx)$. 
	\vspace{-1em}
	\subsection{Convergence Analysis of MIMO OA-DFL}
	To facilitate the analysis, we introduce the following lemma based on Assumption 1.
	\begin{lemma}\label{lemma_Wmatrix}
		For every $\mathbf{W}$ satisfying Assumption 1, we have
		\begin{align}
			\norm{\mathbf{W}^k -\frac{1}{M}\1\1^\mathrm{T}}_2^2 \leq {\delta(\mathbf{W})}^k, ~\forall k \in \R_+.
		\end{align}
	\end{lemma}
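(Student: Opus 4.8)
The plan is to diagonalize $\mathbf{W}$ and exploit the spectral structure guaranteed by Assumption~\ref{as1}. Since $\mathbf{W}$ is real symmetric, the spectral theorem gives an orthonormal eigenbasis $\vv_1,\dots,\vv_M$ with real eigenvalues $\lambda_1\geq\cdots\geq\lambda_M$. Because $\mathbf{W}$ is doubly stochastic, $\lambda_1=1$ with eigenvector $\vv_1=\frac{1}{\sqrt{M}}\1$, and $|\lambda_i|\leq 1$ for all $i$. First I would write the spectral decomposition $\mathbf{W}=\sum_{i=1}^M \lambda_i \vv_i\vv_i^{\mathrm{T}}$, so that $\mathbf{W}^k=\sum_{i=1}^M \lambda_i^k \vv_i\vv_i^{\mathrm{T}}$, and observe that $\frac{1}{M}\1\1^{\mathrm{T}}=\vv_1\vv_1^{\mathrm{T}}$ is exactly the rank-one term corresponding to $\lambda_1=1$.

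Next I would subtract that leading term: $\mathbf{W}^k-\frac{1}{M}\1\1^{\mathrm{T}}=\sum_{i=2}^M \lambda_i^k \vv_i\vv_i^{\mathrm{T}}$. This is again a symmetric matrix whose eigenvalues are precisely $\{\lambda_i^k\}_{i=2}^M$ (together with a zero eigenvalue along $\vv_1$). Hence its spectral norm equals $\max_{i\geq 2}|\lambda_i|^k=(\max_{i\geq 2}|\lambda_i|)^k$. Since for $i\geq 2$ we have $|\lambda_i|\leq\max\{|\lambda_2|,|\lambda_M|\}$ (the extreme eigenvalues among the non-Perron ones bound all of them in absolute value), it follows that
\begin{align}
\norm{\mathbf{W}^k-\frac{1}{M}\1\1^{\mathrm{T}}}_2 \leq \big(\max\{|\lambda_2(\mathbf{W})|,|\lambda_M(\mathbf{W})|\}\big)^k.
\end{align}
Squaring both sides and invoking the definition ${\delta(\mathbf{W})}=(\max\{|\lambda_2(\mathbf{W})|,|\lambda_M(\mathbf{W})|\})^2$ yields $\norm{\mathbf{W}^k-\frac{1}{M}\1\1^{\mathrm{T}}}_2^2\leq{\delta(\mathbf{W})}^k$, which is the claim.

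One detail I would be careful about is the case of non-integer $k\in\R_+$, since the statement allows any positive real exponent. For that I would note that $\mathbf{W}^k$ should be interpreted via the functional calculus $\mathbf{W}^k=\sum_i \lambda_i^k\vv_i\vv_i^{\mathrm{T}}$ (well defined since the relevant eigenvalues entering the bound are those with $|\lambda_i|<1$, and the $\lambda_1=1$ term is removed), and the same diagonalization argument goes through verbatim; alternatively, if only integer $k$ is actually used downstream, the integer case suffices. The main (and only mildly delicate) obstacle is bookkeeping the extremal-eigenvalue inequality $|\lambda_i|\leq\max\{|\lambda_2|,|\lambda_M|\}$ for $2\leq i\leq M-1$ and confirming that removing the Perron component leaves exactly the eigenvalues $\lambda_2,\dots,\lambda_M$ scaled by the $k$-th power — everything else is a direct consequence of symmetry and double stochasticity.
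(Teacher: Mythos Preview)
Your proposal is correct; the spectral-decomposition argument is the standard way to obtain this bound, and your bookkeeping of the extremal eigenvalue among $\lambda_2,\dots,\lambda_M$ is accurate (in fact one gets equality, not just $\leq$). The paper itself does not supply a self-contained proof but simply cites \cite[Remark~15]{koloskova2019decentralized}, so there is no in-paper argument to compare against; your derivation is essentially what that reference records, and your caveat about non-integer $k$ (where $\lambda_i^k$ may fail to be real if some $\lambda_i<0$) is a valid observation that the paper glosses over---only integer powers are actually used downstream.
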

	\begin{proof}
		See \cite[Remark 15]{koloskova2019decentralized}.
	\end{proof} 
	{Lemma~\ref{lemma_Wmatrix}} states that  $\mathbf{W}^{k}$  converges to $\frac{1}{M}\1\1^\mathrm{T}$ in the sense of $\ell_2$ norm as $k$ goes to infinity.  Note that $\frac{1}{M}\1\1^\mathrm{T}$ is itself a  symmetric doubly stochastic matrix, representing a fully connected communication topology. 
	The global model average $\frac{\mX^{(t)}\1}{M}=\frac{1}{M}\sum_{i=1}^{M}{\xx}_{i}^{(t)}$ can be accessed by every device in this topology, which is similar to the centralized federated learning\cite{amiri2020federated}.
	\begin{proposition}\label{theorem1}
		Under Assumption 1-3, with $\lambda \leq {1}/{\omega}$, we have
		\begin{align}
			&\frac{1}{T}\sum_{t=0}^{T-1}\E\norm{\nabla f\left(\frac{\mX^{(t)}\1}{M}\right)}^2\leq \frac{1}{\left(\frac{1}{2}-{27M\lambda^2}G(\mathbf{W})\right)}\Bigg( \frac{f(\frac{\mX^{(0)}\1}{M})-{f^{\star}}}{\lambda T }+\frac{\alpha^2}{M}\notag\\&
			+(3M\alpha^2\lambda^2+27M\beta^2\lambda^2)G(\mathbf{W})+\frac{9G(\mathbf{W})}{T}\sum_{t=0}^{T-1}\E\norm{\mE^{(t)}}^2_F+\frac{1}{\lambda^2{M}^2T}\sum_{t=0}^{T-1}\E\norm{{\mE^{(t)}\1}}^2\Bigg)\label{conver1}
		\end{align}
		where the expectation on the left hand side of \eqref{conver1} is over the randomness of channel noise and stochastic data sampling, the expectation on the right hand side is over the randomness of channel noise, $\frac{1}{T}\sum_{t=0}^{T-1}\E\norm{\nabla f\left(\frac{\mX^{(t)}\1}{M}\right)}^2$  is the convergence metric\cite{lian2017can}, the right hand side of   \eqref{conver1} is the convergence bound, $G(\mathbf{W})\triangleq\frac{\omega^2}{(1-\sqrt{{\delta(\mathbf{W})}})^2-27M\lambda^2\omega^2}$, ${\mathbf{X}}^{(t)} \triangleq \left[{\xx}_{1}^{(t)},\dots, {\xx}_{M}^{(t)}~\right] $, $\hat{\mX}^{(t+\frac{1}{2})}\triangleq \left[{\hat{\xx}}_{1}^{(t+\frac{1}{2})},\dots, \hat{\xx}_{M}^{(t+\frac{1}{2})}~\right]$, $\mE^{(t)}\triangleq \mX^{(t)}\mW-\hat{\mX}^{(t+\frac{1}{2})} $ denotes the communication error matrix for all devices in round $t$, and ${f^{\star}}$ denotes the minimum value of the loss function.
	\end{proposition}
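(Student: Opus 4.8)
The plan is to track the evolution of the network-averaged model $\bar{\xx}^{(t)} \triangleq \frac{\mX^{(t)}\1}{M}$ and separately control the consensus deviation $\mX^{(t)} - \bar{\xx}^{(t)}\1^{\mathrm{T}}$, which is the standard two-component decomposition for decentralized SGD but now carrying the extra communication-error matrix $\mE^{(t)}$. First I would write the matrix-form update: stacking \eqref{update} over all devices gives $\mX^{(t+1)} = \hat{\mX}^{(t+\frac{1}{2})} - \lambda \,\partial F(\mX^{(t)}) = \mX^{(t)}\mW - \mE^{(t)} - \lambda\,\partial F(\mX^{(t)})$, where $\partial F(\mX^{(t)})$ is the concatenation of the local stochastic gradients. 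Right-multiplying by $\1/M$ and using $\mW\1 = \1$ yields the averaged recursion $\bar{\xx}^{(t+1)} = \bar{\xx}^{(t)} - \frac{\mE^{(t)}\1}{M} - \frac{\lambda}{M}\partial F(\mX^{(t)})\1$. So the average behaves like inexact gradient descent with a gradient evaluated at the \emph{local} (not averaged) models plus an error-induced perturbation $\mE^{(t)}\1/M$.

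Next I would apply the $\omega$-smoothness descent lemma (Assumption 2) to $f$ along the sequence $\bar{\xx}^{(t)} \to \bar{\xx}^{(t+1)}$. Taking conditional expectation over the stochastic sampling and channel noise, the cross term produces $-\lambda \E\langle \nabla f(\bar{\xx}^{(t)}), \frac{1}{M}\partial f(\mX^{(t)})\1\rangle$, which I would split via the polarization identity $\langle a,b\rangle = \frac12(\norm{a}^2 + \norm{b}^2 - \norm{a-b}^2)$ into a negative $\norm{\nabla f(\bar{\xx}^{(t)})}^2$ term, a term controlled by the average gradient norm, and a residual $\norm{\frac{1}{M}\sum_i (\nabla f_i(\bar{\xx}^{(t)}) - \nabla f_i(\xx_i^{(t)}))}^2 \le \frac{\omega^2}{M}\sum_i \norm{\bar{\xx}^{(t)} - \xx_i^{(t)}}^2$ — i.e. bounded by the consensus error times $\omega^2$. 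The stochastic-gradient variance terms are absorbed using Assumption 3 (the $\alpha^2/M$ and, through the next step, the $\beta^2$ contributions), and the noise/communication-error perturbation $\frac{\mE^{(t)}\1}{M}$ contributes the $\frac{1}{\lambda^2 M^2 T}\sum_t \E\norm{\mE^{(t)}\1}^2$ term after rearranging and dividing by $\lambda$.

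The core technical work is bounding $\frac{1}{T}\sum_t \E\norm{\mX^{(t)} - \bar{\xx}^{(t)}\1^{\mathrm{T}}}_F^2$, the accumulated consensus error. For this I would iterate the matrix recursion: $\mX^{(t+1)}(\mathbf{I} - \frac1M\1\1^{\mathrm{T}}) = \sum_{k} \big(\text{perturbation at step } t-k\big)\big(\mW^{k} - \frac1M\1\1^{\mathrm{T}}\big)$ where the perturbations are $-\mE^{(\cdot)}$ and $-\lambda\,\partial F(\mX^{(\cdot)})$. Applying Lemma~\ref{lemma_Wmatrix} gives the geometric factor $\sqrt{\delta(\mathbf{W})}^{\,k}$ on each term; summing the geometric series introduces the $(1-\sqrt{\delta(\mathbf{W})})$ denominators visible in $G(\mathbf{W})$. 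Then I would use Jensen/Cauchy-Schwarz on the convolution sum, bound $\E\norm{\partial F(\mX^{(k)})}_F^2$ via Assumption 3 in terms of $M\alpha^2$, $M\beta^2$, $\sum_i\E\norm{\nabla f(\bar{\xx}^{(k)})}^2$, and again the consensus error (creating a self-referential inequality in $\sum_t \E\norm{\mX^{(t)} - \bar{\xx}^{(t)}\1^{\mathrm{T}}}_F^2$), and collect the $\E\norm{\mE^{(t)}}_F^2$ terms. Solving this self-referential inequality — which requires $27M\lambda^2\omega^2 < (1-\sqrt{\delta(\mathbf{W})})^2$ so the coefficient stays positive, exactly the condition hidden in the definition of $G(\mathbf{W})$ — yields a clean bound on the consensus error in terms of $G(\mathbf{W})$ times the gradient-norm sum plus the noise terms. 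Substituting this back into the descent inequality, telescoping over $t=0,\dots,T-1$, dividing by $\lambda T$, and moving the $27M\lambda^2 G(\mathbf{W})\cdot\frac1T\sum_t\E\norm{\nabla f(\bar{\xx}^{(t)})}^2$ term to the left-hand side produces the stated bound with the prefactor $\big(\frac12 - 27M\lambda^2 G(\mathbf{W})\big)^{-1}$.

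\textbf{Main obstacle.} The delicate part is the bookkeeping in the self-referential consensus-error inequality: the term $\norm{\partial F(\mX^{(k)})}_F^2$ must be re-expanded into a piece proportional to the very quantity being bounded, so the constants ($27$, $9$, $3$, the factor $M$, and the precise form of $G(\mathbf{W})$) come from carefully tracking how many times Cauchy-Schwarz-type inequalities ($\norm{a+b}^2 \le (1+c)\norm{a}^2 + (1+1/c)\norm{b}^2$) are invoked and with which split parameters $c$, while keeping the residual coefficient on the consensus error strictly below $1$. Getting that algebra to close with the learning-rate condition $\lambda \le 1/\omega$ (together with the implicit $27M\lambda^2\omega^2 < (1-\sqrt{\delta(\mathbf{W})})^2$) is where essentially all the real effort lies; everything else is the textbook decentralized-SGD argument adapted to carry $\mE^{(t)}$.
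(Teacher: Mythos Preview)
Your proposal is correct and follows essentially the same route as the paper's proof: descent lemma on the averaged iterate with the polarization identity, bounding the cross term by the consensus error, unrolling the consensus recursion into a convolution against $\mW^{k}-\frac{1}{M}\1\1^{\mathrm{T}}$, applying Lemma~\ref{lemma_Wmatrix} for the geometric decay, and then solving the self-referential inequality in the summed consensus error before telescoping. The paper works column-by-column (bounding $\E\norm{\frac{\mX^{(t)}\1}{M}-\mX^{(t)}\ee_i}^2$ and assuming $\mX^{(0)}=0$) rather than in the projected-matrix form you sketch, but this is only a notational difference.
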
	
	\begin{proof}
		Please refer to Appendix \ref{Appendix_Prf_covergence1}.
	\end{proof}
	Since $\frac{\mX^{(t)}\1}{M}=\frac{1}{M}\sum_{i=1}^{M}{\xx}_{i}^{(t)}$, the above proposition captures the convergence of the average of local model ${\xx}_{i}^{(t)}$, considering that there is no unified model among the decentralized devices\footnote{In Section \ref{sec-simluation}, we show that all the devices can reach consensus under our design.}. 
	
	To simplify our analysis, for each communication round $t$, we assume that the model parameters  $\{\tilde{\xx}^{(t)}_i| i \in [M]\}$ are independent and the model parameter elements $\{\tilde{x}^{(t)}_i[d]| d \in [D]\}, \forall i \in [M]$ are independent and identically distributed. Then, we have the following correlation matrices
	\begin{align}
		\E\left[ \tilde{\xx}^{(t)}_i (\tilde{\xx}^{(t)}_j)^{\text{T}} \right]=\mathbf{0}, \forall i \neq j \in [M],~\text{and} ~\E\left[ \tilde{\xx}^{(t)}_i (\tilde{\xx}^{(t)}_i)^{\text{T}} \right]=\mathbf{I},\forall i\in [M].\label{corr}
	\end{align}
	Based on the above assumption, we have the following proposition.
	\begin{proposition}
		\label{Corollary:MSE}
		Under the MIMO OA-DFL scheme, with the correlation assumption given in \eqref{corr}, the terms related to communication error matrix $\mE^{(t)}$ in \eqref{conver1} are given by
		\begin{align}
			\E\norm{\mE^{(t)}}^2_F
			={C}&\sum_{p=1}^{M}\bigg(\sum_{i\in M_p}2\left(w_{ip}{v_{p}^{(t)}}\right)^2-4\sum_{i\in M_p}w_{ip}\operatorname{Re}\left\{{v_{p}^{(t)}}({\ff_{i}^{(t)}})^H{\uu_{p}^{(t)}}{\mathbf{H}_{\langle i,p\rangle}^{(t)}}\right\}\notag\\&+2\sum_{i\in M_p}\left({\ff_{i}^{(t)}})^{\mathrm{H}}{\mathbf{H}_{\langle i,p\rangle}^{(t)}}{\uu_{p}^{(t)}}\right)\left(({\ff_{i}^{(t)}})^{\mathrm{H}}{\mathbf{H}_{\langle i,p\rangle}^{(t)}}{\uu_{p}^{(t)}}\right)^H+\sigma_n^2\norm{{\ff_{i}^{(t)}}}^2\bigg)\label{E_t}\\
			\E\norm{{\mE^{(t)}\1}}^2
			=&\frac{C{M}^2}{n^2}\bigg(\sum_{p=1}^{M}\sum_{i,j\in \cM_p}^{}2(w_{ip}w_{jp}({v_{p}^{(t)}})^2)-4\sum_{p=1}^{M}\sum_{i,j\in \cM_p}^{}w_{ip}\operatorname{Re}\left\{{v_{p}^{(t)}}({\ff_{j}^{(t)}})^{\mathrm{H}}\mathbf{H}_{\langle j,p\rangle}^{(t)}{\uu_{p}^{(t)}}\right\}\notag\\
			&+\!\!2\sum_{p=1}^{M}\sum_{i,j\in \cM_p}^{}\left(({\ff_{j}^{(t)}})^{\mathrm{H}}\mathbf{H}_{\langle j,p\rangle}^{(t)}{\uu_{p}^{(t)}}\right)\left(({\ff_{i}^{(t)}})^H{\mathbf{H}_{\langle i,p\rangle}^{(t)}}{\uu_{p}^{(t)}}\right)
			\!+\!\sum_{i=1}^{M}\left(\sigma_n^2\norm{{\ff_{i}^{(t)}}}^2\right)\!\!\bigg)\label{E_t_1}
		\end{align}
	\end{proposition}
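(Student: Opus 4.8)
The plan is to compute the two expectations by first writing the communication error matrix $\mE^{(t)} = \mX^{(t)}\mW - \hat{\mX}^{(t+\frac{1}{2})}$ column by column, i.e., examine the error at each device $p$ separately. From the definition of the aggregation in \eqref{free_aggre} and the reconstruction in \eqref{rece1}, the $p$-th column of $\mE^{(t)}$ is $\mathbf{e}_p^{(t)} = \xx_p^{(t+\frac12)} - \hat{\xx}_p^{(t+\frac12)}$. Substituting the normalization \eqref{nomalize}, the complex conversion \eqref{complex1}, the transmit signal \eqref{trans_signal}, and the received signal \eqref{wholechannel}, the mean-compensation term $\tilde{x}_p^{(t)}\1_D$ and the local-model term $w_{p,p}\xx_p^{(t)}$ cancel exactly, leaving the error purely in terms of the scaled/beamformed neighbor contributions and the channel noise. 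Concretely, for each neighbor $i \in \cM_p$ the effective scalar channel gain seen at device $p$ is $({\ff_i^{(t)}})^{\mathrm H}\mathbf{H}_{\langle i,p\rangle}^{(t)}{\uu_p^{(t)}}$ (note: in the paper's indexing the receiver is $p$ and transmitter $i$, so one must be careful which device's beamformers appear), and the residual after recombining equals $\sum_{i\in\cM_p}\big(w_{ip}\sqrt{v_p^{(t)}}\,\mathbf{1} - (\text{effective gain})\big)\cdot(\text{the real/imaginary repacking of }\rr_p^{(t)})$ plus a noise term $\mathrm{Re}/\mathrm{Im}$ repacking of $\mathbf{N}_i^{\mathrm T}({\ff_i^{(t)}})^\ast$.

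Next I would take the Frobenius norm squared and the squared norm of $\mE^{(t)}\1$ and push the expectation inside. The key simplification is the correlation assumption \eqref{corr}: since the $D$ entries of $\tilde{\xx}_p^{(t)}$ are i.i.d.\ with unit variance and the repacking into real/imaginary parts is measure-preserving, cross terms between different devices' signals vanish and $\E\|\tilde{\xx}_p^{(t)}\|^2 = D$ (equivalently the complex vector $\rr_p^{(t)}$ has $\E[\rr_p^{(t)}(\rr_p^{(t)})^{\mathrm H}] = \mathbf{I}$, accounting for the factor $C$ which presumably absorbs the $D$ or $D/2$ scaling). For $\E\|\mE^{(t)}\|_F^2$ this turns the per-device squared residual into $\sum_p\big(\,2(w_{ip}v_p^{(t)})^2 - 4 w_{ip}\mathrm{Re}\{v_p^{(t)}({\ff_i^{(t)}})^{\mathrm H}{\uu_p^{(t)}}\mathbf{H}_{\langle i,p\rangle}^{(t)}\} + 2|({\ff_i^{(t)}})^{\mathrm H}\mathbf{H}_{\langle i,p\rangle}^{(t)}{\uu_p^{(t)}}|^2\big)$ summed over $i\in\cM_p$, plus the noise variance $\sigma_n^2\|{\ff_i^{(t)}}\|^2$ coming from $\E[\mathbf{N}_i^{\mathrm T}({\ff_i^{(t)}})^\ast ({\ff_i^{(t)}})^{\mathrm T}\mathbf{N}_i^\ast] = \sigma_n^2\|{\ff_i^{(t)}}\|^2 L$ (again a factor absorbed into $C$ or handled by the $n$ in the second expression). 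The factor of $2$ throughout is the standard bookkeeping from splitting a complex vector of length $L = D/2$ into a real vector of length $D$.

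For $\E\|\mE^{(t)}\1\|^2$ the same substitution applies, but now one expands $\|\sum_p \mathbf{e}_p^{(t)}\|^2$, which produces the double sum over $i,j\in\cM_p$ within each device and, crucially, no cross terms between distinct devices $p\neq p'$ because $\E[\tilde{\xx}_p^{(t)}(\tilde{\xx}_{p'}^{(t)})^{\mathrm T}] = \mathbf{0}$ for $p\neq p'$ and the noises $\mathbf{N}_p, \mathbf{N}_{p'}$ are independent. Each within-device double sum gives the cross-correlation $(({\ff_j^{(t)}})^{\mathrm H}\mathbf{H}_{\langle j,p\rangle}^{(t)}{\uu_p^{(t)}})(({\ff_i^{(t)}})^{\mathrm H}\mathbf{H}_{\langle i,p\rangle}^{(t)}{\uu_p^{(t)}})$ and the corresponding terms $w_{ip}w_{jp}(v_p^{(t)})^2$ and $w_{ip}\mathrm{Re}\{v_p^{(t)}({\ff_j^{(t)}})^{\mathrm H}\mathbf{H}_{\langle j,p\rangle}^{(t)}{\uu_p^{(t)}}\}$; the noise contributes $\sum_{i=1}^M \sigma_n^2\|{\ff_i^{(t)}}\|^2$ since summing over devices hits each receiver's noise once. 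The prefactor $\frac{CM^2}{n^2}$ presumably comes from the $\frac1M$ averaging weights being folded into $w_{ij}$ versus the raw sum $\mE^{(t)}\1$, combined with the normalization constant.

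The main obstacle I anticipate is bookkeeping the constant $C$ (and the $n$ in \eqref{E_t_1}) consistently: it bundles together the block length $L = D/2$, the dimension $D$ from $\E\|\tilde{\xx}_p\|^2 = D$, and the factor-of-$2$ real/imaginary split, and the statement is only meaningful once one fixes exactly how these normalizations interact — in particular whether the power constraint $2\|\uu_i\|^2\le P_0$ has already been used to rescale things. A secondary subtlety is the index convention in \eqref{wholechannel}–\eqref{rece1}: the paper writes $\mathbf{H}_{\langle i,j\rangle}$ for the channel between device $i$ and device $j$ and uses $\mathbf{N}_{k,i}$ (likely a typo for $\mathbf{N}_i$), so I would need to carefully track which device's beamformer $\ff$ and which device's beamformer $\uu$ multiply which channel — getting the cancellation of $w_{p,p}\xx_p^{(t)}$ and $\tilde{x}_p^{(t)}\1_D$ right depends on this. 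Everything else is a routine but lengthy expansion of a quadratic form under the second-moment assumption \eqref{corr} plus the whiteness of the AWGN.
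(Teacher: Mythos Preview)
Your approach is essentially the same as the paper's: substitute the normalization \eqref{nomalize}, complex packing \eqref{complex1}, received signal \eqref{wholechannel}, and reconstruction \eqref{rece1} into the definition of $\mE^{(t)}$ to obtain a misalignment-plus-noise decomposition for each column, then invoke the correlation assumption \eqref{corr} and AWGN independence to kill the cross terms and read off the quadratic forms. The only step the paper does that you leave implicit is swapping the order of the double sum (the paper first writes the error with receiver index $i$ and transmitter index $j$, then re-indexes so the outer sum runs over the transmitter $p$), which is precisely the index bookkeeping you already flagged as a subtlety.
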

	\begin{proof}
		Please refer to Appendix \ref{app_c}.
	\end{proof}
	
	\begin{proposition}
		\label{monotonic}
		The right hand side (RHS) of \eqref{conver1}  monotonically increases with respect to  ${\delta(\mathbf{W})}$.
	\end{proposition}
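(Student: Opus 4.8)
The plan is to exploit the fact that $\delta(\mathbf{W})$ enters the RHS of \eqref{conver1} only through the single scalar $G(\mathbf{W})$, and that $G(\mathbf{W})$ itself depends on $\delta(\mathbf{W})$ only through the composite map $\delta \mapsto (1-\sqrt{\delta})^2$. So I would establish two elementary monotonicity facts and then compose them: (i) $G(\mathbf{W})$ is increasing in $\delta(\mathbf{W})$, and (ii) the RHS of \eqref{conver1} is increasing in $G(\mathbf{W})$ when all the remaining quantities appearing in the expression are regarded as fixed nonnegative parameters.

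For (i), I would first note that Assumption 1 gives $\delta(\mathbf{W}) \in [0,1)$, so $\delta \mapsto 1-\sqrt{\delta}$ is strictly positive and strictly decreasing on this interval, hence $\delta \mapsto (1-\sqrt{\delta})^2$ is strictly decreasing. It follows that the denominator $(1-\sqrt{\delta(\mathbf{W})})^2 - 27M\lambda^2\omega^2$ of $G(\mathbf{W})$ is strictly decreasing in $\delta(\mathbf{W})$, and it is strictly positive in the regime where the bound is meaningful; therefore $G(\mathbf{W}) = \omega^2/\bigl((1-\sqrt{\delta(\mathbf{W})})^2 - 27M\lambda^2\omega^2\bigr)$ is (strictly) increasing in $\delta(\mathbf{W})$.

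For (ii), I would rewrite the RHS of \eqref{conver1} as $(A + B\,G(\mathbf{W}))/(\tfrac12 - C\,G(\mathbf{W}))$, collecting the $G$-independent part $A \triangleq \tfrac{f(\mX^{(0)}\1/M) - f^\star}{\lambda T} + \tfrac{\alpha^2}{M} + \tfrac{1}{\lambda^2 M^2 T}\sum_{t}\E\norm{\mE^{(t)}\1}^2 \ge 0$, the coefficient $B \triangleq 3M\alpha^2\lambda^2 + 27M\beta^2\lambda^2 + \tfrac{9}{T}\sum_{t}\E\norm{\mE^{(t)}}_F^2 \ge 0$, and $C \triangleq 27M\lambda^2 > 0$; here the error terms $\E\norm{\mE^{(t)}}_F^2$ and $\E\norm{\mE^{(t)}\1}^2$ are treated as fixed because, by Proposition \ref{Corollary:MSE}, they are governed by the beamformers and the entries of $\mathbf{W}$ rather than by the spectral quantity $\delta(\mathbf{W})$. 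A one-line quotient-rule computation then gives $\tfrac{\mathrm d}{\mathrm d G}\bigl[(A+BG)/(\tfrac12-CG)\bigr] = (\tfrac{B}{2}+CA)/(\tfrac12-CG)^2 \ge 0$, so the RHS is nondecreasing in $G(\mathbf{W})$ on the feasible region $\tfrac12 - C\,G(\mathbf{W}) > 0$. Composing (i) and (ii) yields the claim. The only point that needs care — and the main, albeit minor, obstacle — is bookkeeping of the feasibility region: I would invoke the implicit standing conditions $(1-\sqrt{\delta(\mathbf{W})})^2 > 27M\lambda^2\omega^2$ and $\tfrac12 - 27M\lambda^2 G(\mathbf{W}) > 0$ underlying Proposition \ref{theorem1} to guarantee that both denominators stay strictly positive throughout, so that the sign conclusions in (i) and (ii) are valid and the quotient-rule manipulations are legitimate.
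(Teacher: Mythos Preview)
Your proposal is correct and follows essentially the same route as the paper's own proof: write the RHS as a linear-fractional function $(A+BG)/(\tfrac12-CG)$ of $G(\mathbf{W})$ with nonnegative coefficients, verify by the quotient rule that this is nondecreasing in $G$, and then observe that $G(\mathbf{W})$ is increasing in $\delta(\mathbf{W})$ on $[0,1)$. Your treatment is in fact slightly more careful than the paper's, since you spell out the positivity of the two denominators and justify treating the communication-error terms as fixed via Proposition~\ref{Corollary:MSE}.
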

	\begin{proof}
		The RHS of \eqref{conver1} can be abbreviated as  $f(G(\mathbf{W}))=\frac{A+G(\mathbf{W})C}{1/2-G(\mathbf{W})D}$, where  $A,B,C,D\geq 0$. Note that $f'(G(\mathbf{W}))=\frac{1/2C+AD}{(1/2-GD)^2}\geq 0$, implying that $f(G(\mathbf{W}))$ is monotonically increasing with respect to $G(\mathbf{W})$.
		Furthermore, since $0\leq\delta(\mathbf{W})<1$ (by Assumption 1), $G(\mathbf{W})$ is also monotonically increasing with respect to $\delta(\mathbf{W})$. Therefore, we conclude that the RHS of \eqref{conver1} monotonically increases with respect to $\delta(\mathbf{W})$. 
	\end{proof}

	\begin{remark}
		{Proposition \ref{theorem1}} provides some insights on the convergence of MIMO OA-DFL. 
		From the communication perspective, it can be observed that the existence of communication error  $\mE^{(t)},\forall t \in [T]$  reduces the convergence rate, where both error terms $\E\norm{\mE^{(t)}}^2_F$ and $\E\norm{{\mE^{(t)}\1}}^2$ accumulate over the training rounds and enlarge the convergence bound.
		From the learning perspective, as shown in {Proposition \ref{monotonic}}, the value of the second-largest squared eigenvalue ${\delta(\mathbf{W})}$   plays a critical role on the learning accuracy. This indicates that the mixing matrix $\mW$ needs to be designed to achieve smaller ${\delta(\mathbf{W})}$ for fast convergence\footnote{We emphasize the determination of aggregation weight in MIMO OA-DFL is different from conventional FL. In FL, the aggregation weights are usually chosen according to the size of the local data set\cite{mcmahan2017communication}. But for the MIMO OA-DFL system, the mixing matrix must satisfy the symmetric doubly stochastic constraint to guarantee consensus,  and need to be carefully designed to improve convergence performance.}. 
	\end{remark}
	From {Propositions \ref{theorem1}} and {\ref{Corollary:MSE}}, we see that the mixing matrix $\mW$ and the beamformers  \{$\uu^{(t)},\ff^{(t)}$\} jointly have impact on the learning performance. In the following, we propose a systematic	communication (i.e.,  beamformers) and learning (i.e.,  mixing matrix)  co-design algorithm to improve the performance of the MIMO OA-DFL system.

	\section{System Optimization}\label{sys_opt}
	To achieve a better learning performance in MIMO OA-DFL, we propose to minimize the RHS of \eqref{conver1}  over   $\mW$,  $\uu^{(t)}$ and  $\ff^{(t)}$. The details are provided below.
		\vspace{-1em}
	\subsection{Problem Formulation}
	We design the MIMO OA-DFL system to minimize the convergence bound \eqref{conver1}. 
	We conduct the system optimization in a round-by-round fashion. For a given decentralized topology,  we minimize the round-based convergence bound by jointly optimizing the mixing matrix $\mW^{(t)}$, the multicast beamformers $\uu^{(t)}$ and the receive beamformers  $\ff^{(t)}$. We omit the superscript $t$ in the sequel for brevity. The optimization problem is then cast as
	\begin{subequations}\label{eq28}
		\begin{align}
			{\text{(P1)}:}~\min_{\mW,\ff,\uu}& \quad \Psi(\mW,\ff,\uu)\triangleq\frac{\left( Q
				+RG(\mathbf{W})+9G(\mathbf{W})\E\norm{\mE}^2_F+\frac{1}{\lambda^2{M}^2}\E\norm{{\mE\1}}^2\right)}{\left(\frac{1}{2}-{27M\lambda^2}G(\mathbf{W})\right)}\label{constop1}\\
			\text{s.t.}&\quad w_{ij}=0,\forall \{ij\}\not\in \mathcal{E},\mathbf{W}^{\mathrm{T}}=\mathbf{W},  \mathbf{W}\1=\1, \mathbf{W} \in [0,1]^{M\times M},\\
			&\quad\norm{\uu_{i}}^2\leq P_0/2,\forall i \in [M],\label{ori_c}
		\end{align}
	\end{subequations}
	where $Q=\frac{f(\frac{\mX^{(0)}\1}{M})-{f^{\star}}}{\lambda T }+\frac{\alpha^2}{M}$, $R=3M\alpha^2\lambda^2+27M\beta^2\lambda^2$, and $G(\mathbf{W})=\frac{\omega^2}{(1-\sqrt{{\delta(\mathbf{W})}})^2-27M\lambda^2\omega^2}$.

	P1 is a non-convex problem. Different from the existing solutions \cite{shi2021over},\cite{lin2022distributed},\cite{yang2020federated} that the transceiver beamforming vectors can be optimized alternately, the new challenge is that even with given beamformers $\ff$ and $\uu$, problem P1 is still non-convex due to the coupling of $\mW$ and its the second-largest squared eigenvalue ${\delta(\mathbf{W})}$. However, by exploiting the monotonicity of  ${\delta(\mathbf{W})}$ and the structural information of matrix $\mW$, this problem can be efficiently solved in an AO manner, as detailed in what follows.
		\vspace{-1em}
	\subsection{Optimizing Beamformers for Given Mixing Matrix}
	We first optimize the beamforming vectors $\uu$ and $\ff$ for given  $\mW$. Dropping the irrelevant terms, we have the following problem
	\begin{align}\label{problem2}
		{\text{(P2)}:}~\min_{\ff,\uu}& \quad d(\mW,\ff,\uu)\triangleq 9G(\mathbf{W})\E\norm{\mE}^2_F+\frac{1}{\lambda^2{M}^2}\E\norm{{\mE\1}}^2,~
		\text{s.t.}~~\eqref{ori_c},
	\end{align}
	where $\E\norm{\mE}^2_F$ and $\E\norm{{\mE\1}}^2$ are given by \eqref{E_t} and \eqref{E_t_1}, respectively. We optimize  $\ff$ and $\uu$ in an alternating fashion, as detailed below.
	\subsubsection{Optimizing $\uu$ for fixed $\ff$} For a fixed $\ff$, the multicast beamforming vectors in $\uu$  can be determined by solving the following problem:
	\begin{align}
		\vspace{-1em}
		{\text{(P3)}:}~\underset{\uu}{\min}~~
		\sum_{p=1}^{M}{\uu_{p}^{\mathrm{H}}}\mathbf{M}_{p}{\uu_{p}}-2\operatorname{Re}\left\{\sum_{p=1}^{M}\mathbf{n}_{p}^{\mathrm{H}}{\uu_{p}}\right\}
		~~\text{s.t.~\eqref{ori_c}}.
	\end{align}
\vspace{-1em}
	where
	\begin{subequations}
			\vspace{-1em}
		\begin{align}
			&\mathbf{M}_{p}={9G(\mathbf{W})}\sum_{i\in \cM_p}{\mathbf{H}_{\langle i,p\rangle}^{\mathrm{H}}}{\ff_{i}}{\ff_{i}}^\mathrm{H}{\mathbf{H}_{\langle i,p\rangle}}+\frac{1}{{\lambda}^2}\frac{1}{M^2}\sum_{i,j\in \cM_p}^{}{\mathbf{H}_{\langle i,p\rangle}^{\mathrm{H}}}{\ff_{i}}{\ff_{j}}^\mathrm{H}{\mathbf{H}_{\langle j,p\rangle}},\label{M_p}\\
			&\mathbf{n}_{p}={9G(\mathbf{W})}\sum_{i\in \cM_p}w_{ip}{v_{p}}({\ff_{i}}^\mathrm{H}{\mathbf{H}_{\langle i,p\rangle}})^{\mathrm{H}}+\frac{1}{{\lambda}^2}\frac{1}{M^2}\sum_{i,j\in \cM_p}^{}{v_{p}}w_{ip}({\ff_{j}}^\mathrm{H}{\mathbf{H}_{\langle j,p\rangle}})^{\mathrm{H}}.\label{n_p}
		\end{align}
	\end{subequations}
	For the  term $\sum_{i,j\in \cM_p}^{}{\mathbf{H}_{\langle i,p\rangle}^{\mathrm{H}}}{\ff_{i}}{\ff_{j}}^\mathrm{H}{\mathbf{H}_{\langle j,p\rangle}}$ in $\mathbf{M}_{p}$, we have
	\begin{align}
		\xx^{\mathrm{H}} \bigg(\sum_{i,j\in \cM_p}^{}{\mathbf{H}_{\langle i,p\rangle}^{\mathrm{H}}}{\ff_{i}}{\ff_{j}}^\mathrm{H}{\mathbf{H}_{\langle j,p\rangle}}\bigg)\xx
		\!\!=\!\!\bigg(\sum_{i\in\cM_p}{}\xx^{\mathrm{H}}{\mathbf{H}_{\langle i,p\rangle}^{\mathrm{H}}}{\ff_{i}}\bigg)\bigg(\sum_{i\in\cM_p}{}\xx^{\mathrm{H}}{\mathbf{H}_{\langle i,p\rangle}^{\mathrm{H}}}{\ff_{i}}\bigg)^{\mathrm{H}}\geq 0, \forall \xx \in \mathbb{C}^{{N_\text{R}}}.
	\end{align}
	Hence, $\mathbf{M}_{p},\forall p\in[M]$ is a positive semidefinite matrix and therefore P3 is a convex QCQP problem. This problem can be solved efficiently by considering its dual:
	\begin{align}
		{\text{(P4)}:}~\underset{{\lambda_p}}{\min}~~
		-{\mathbf{n}}_{p}^{\mathrm{H}}(\mathbf{M}_{p}+\lambda_p\mathbf{I})^{\dagger}\mathbf{n}_{p}-{P_0\lambda_p}/{2}
		~~\text{s.t.~} \lambda_p \geq 0, \forall p \in [M].
	\end{align}
	Then the optimal beamformer is given by $\uu_p^{\star}=(\mathbf{M}_{p}+\lambda_p^{\star}\mathbf{I})^{\dagger}\mathbf{n}_{p},\forall p \in [M]$, where $\lambda_p^{\star}$ is the solution to P4.
	\subsubsection{Optimizing $\ff_p$  for fixed $\uu$ and $\{{\ff_{i}}\}_{i\neq p}$} 
	We optimize each $\ff_p$ alternatingly.
	With fixed $\uu$ and $\{{\ff_{i}}\}_{i\neq p}$, problem P2 reduces to
	\begin{align}
		{\text{(P5)}:}~\underset{\ff_{p}}{\min}~~
		\ff_{p}^{\mathrm{H}}\mathbf{A}_{p}\ff_{p}-4\operatorname{Re}\{\mathbf{b}_{p}^{\mathrm{H}}\ff_{p}\}
	\end{align}
	\setlength{\belowdisplayskip}{0pt}
	where
	\begin{subequations}
			\vspace{-1em}
		\begin{align}
			&\mathbf{A}_{p}=({18G(\mathbf{W})}+2\frac{1}{{\lambda}^2}\frac{1}{M^2})\sum_{j\in \cM_p}\mathbf{H}_{\langle p,j\rangle}{\uu_{j}}{\uu_{j}^{\mathrm{H}}}\mathbf{H}_{\langle p,j\rangle}^{\mathrm{H}}+(\frac{1}{{\lambda}^2}\frac{1}{M^2}+{9G(\mathbf{W})})\sigma_n^2\mathbf{I}_{{N_\text{R}}},\label{A_p}\\
				\vspace{-1em}
			&\mathbf{b}_{p}={9G(\mathbf{W})}\!\!\sum_{j\in \cM_p}\!\!w_{pj}{v_{j}}({\uu_{j}^{\mathrm{H}}}\mathbf{H}_{\langle p,j\rangle}^{\mathrm{H}})^{\mathrm{H}}+\frac{1}{{\lambda}^2}\frac{1}{M^2}\sum_{i=1}^{M}\!\sum_{j\in \cM_i,\cM_p}^{}\!\!\!\!w_{ij}{v_{j}}\left({\uu_{j}^{\mathrm{H}}}\mathbf{H}_{\langle p,j\rangle}^{\mathrm{H}}\right)^{\mathrm{H}}\!\!\notag\\
				\vspace{-1em}
			&-\!\frac{1}{{\lambda}^2}\frac{1}{M^2}\sum_{i=1,i\neq p}^{n}\sum_{j\in \cM_p,\cM_i}^{}\!\!({\ff_{i}}^\mathrm{H}{\mathbf{H}_{\langle i,j\rangle}}{\uu_{j}}{\uu_{j}^{\mathrm{H}}}\mathbf{H}_{\langle p,j\rangle}^{\mathrm{H}})^{\mathrm{H}}.\label{b_p}
		\end{align}
	\end{subequations}
	This is an unconstrained convex problem, and the optimal solution is 
	$\ff_{p}^\star=2\mathbf{A}_{p}^{-1}\mathbf{b}_{p},\forall p \in [M]$.
	
		\vspace{-1em}
	\subsection{Optimizing Mixing Matrix for Given Beamformers}
	What remains is to optimize the mixing matrix. For given $\uu$ and $\ff$, the problem P1 can be expressed as
	\begin{subequations}\label{sub21}
		\begin{align}
			{\text{(P5)}:}~\min_{\mW}&\quad \frac{ Q
				+RG(\mathbf{W})+9G(\mathbf{W})\E\norm{\mE}^2_F+\frac{1}{\lambda^2{M}^2}\E\norm{{\mE\1}}^2}{\frac{1}{2}-{27M\lambda^2}G(\mathbf{W})}\label{28aaa}\\
			\text{s.t.}&\quad w_{ij}=0,\forall \{ij\}\not\in \mathcal{E},\mathbf{W}^{\mathrm{T}}=\mathbf{W},  \mathbf{W}\1=\1, \mathbf{W} \in [0,1]^{M\times M},\label{sub21_cons}
		\end{align}
	\end{subequations}
	where  $G(\mathbf{W})=\frac{\omega^2}{(1-\sqrt{{\delta(\mathbf{W})}})^2-27M\lambda^2\omega^2}$. 
	We introduce slack variable $\hat{\delta}$ and reformulate problem P5 as
	\begin{subequations}\label{sub22}
		\begin{align}
			{\text{(P6)}:}~\min_{\mW,\hat{\delta}}&\quad \frac{Q
				+R{G}(\hat{\delta})+9{G}(\hat{\delta})\E\norm{\mE}^2_F+\frac{1}{\lambda^2{M}^2}\E\norm{{\mE\1}}^2}{\frac{1}{2}-{27M\lambda^2}{G}(\hat{\delta})}\label{sub22_ojb1}\\
			\text{s.t.}&\quad {\delta(\mathbf{W})}\leq \hat{\delta},~ \eqref{sub21_cons}.
		\end{align}
	\end{subequations}
	where ${G}(\hat{\delta})=\frac{\omega^2}{(1-\sqrt{\hat{\delta}})^2-27M\lambda^2\omega^2}$.
	\begin{proposition}\label{equivalent}
		Problem P6 is equivalent to P5.
	\end{proposition}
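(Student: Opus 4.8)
The plan is to show that the slack variable $\hat{\delta}$ introduced in P6 is always tight at optimality, so that any optimal solution of P6 yields a feasible point of P5 with the same objective value, and vice versa. First I would observe that P6 is a relaxation of P5 in the following sense: given any feasible $\mathbf{W}$ for P5, the pair $(\mathbf{W}, \hat{\delta}) = (\mathbf{W}, \delta(\mathbf{W}))$ is feasible for P6 (the constraint $\delta(\mathbf{W}) \le \hat{\delta}$ holds with equality), and since $G(\hat{\delta}) = G(\delta(\mathbf{W})) = G(\mathbf{W})$ in that case, the P6 objective \eqref{sub22_ojb1} evaluated at $(\mathbf{W}, \delta(\mathbf{W}))$ coincides with the P5 objective \eqref{28aaa} at $\mathbf{W}$. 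Hence $\mathrm{opt}(\text{P6}) \le \mathrm{opt}(\text{P5})$.

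For the reverse inequality, I would take any feasible $(\mathbf{W}, \hat{\delta})$ for P6 and argue that replacing $\hat{\delta}$ by $\delta(\mathbf{W})$ can only decrease (weakly) the objective, while keeping feasibility. Feasibility is immediate since $\delta(\mathbf{W}) \le \delta(\mathbf{W})$ and the remaining constraints \eqref{sub21_cons} do not involve $\hat{\delta}$. For the objective, the key fact is monotonicity: the map $\hat{\delta} \mapsto G(\hat{\delta}) = \frac{\omega^2}{(1-\sqrt{\hat{\delta}})^2 - 27M\lambda^2\omega^2}$ is nondecreasing on the relevant range (as $\hat{\delta}$ increases, $1-\sqrt{\hat{\delta}}$ decreases toward $0$, so the positive denominator shrinks), exactly as already exploited in the proof of Proposition \ref{monotonic}. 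Then, invoking the same argument as in Proposition \ref{monotonic} — namely that the objective \eqref{sub22_ojb1} has the form $\frac{A + G\,C'}{1/2 - G\,D}$ with $A, C', D \ge 0$ (here $C' = R + 9\E\norm{\mE}^2_F$ absorbs the beamformer-dependent but $\mathbf{W}$-independent terms, and the extra additive constant $\frac{1}{\lambda^2 M^2}\E\norm{\mE\1}^2$ does not affect monotonicity) — this ratio is nondecreasing in $G$, hence nondecreasing in $\hat{\delta}$. Therefore lowering $\hat{\delta}$ to its smallest feasible value $\delta(\mathbf{W})$ does not increase the objective, giving a feasible point $(\mathbf{W}, \delta(\mathbf{W}))$ of P6 whose objective equals the P5 objective at $\mathbf{W}$; consequently $\mathrm{opt}(\text{P5}) \le \mathrm{opt}(\text{P6})$. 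Combining the two inequalities yields equality of the optimal values, and the above construction shows the optimal solution sets correspond, which is the claimed equivalence.

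The main obstacle, and the only point requiring a little care, is verifying that the denominator $(1-\sqrt{\hat{\delta}})^2 - 27M\lambda^2\omega^2$ stays strictly positive over the feasible region — otherwise $G(\hat{\delta})$ changes sign and the monotonicity argument breaks. I would handle this by noting that the condition $\lambda \le 1/\omega$ from Proposition \ref{theorem1}, together with $\hat{\delta} \ge \delta(\mathbf{W})$ and Assumption \ref{as1} ($\delta(\mathbf{W}) < 1$), is implicitly assumed to keep $G(\mathbf{W})$ well-defined and positive throughout (this is already presumed when $G(\mathbf{W})$ appears in \eqref{conver1}); equivalently one restricts the feasible set of P6 to those $\hat{\delta}$ with $(1-\sqrt{\hat{\delta}})^2 > 27M\lambda^2\omega^2$, which is exactly the regime where P5 is meaningful. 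Under this standing condition the sign analysis of $f'(G)$ from Proposition \ref{monotonic} applies verbatim, and the equivalence follows.
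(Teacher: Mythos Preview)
Your proposal is correct and follows essentially the same approach as the paper: both arguments invoke Proposition~\ref{monotonic} to establish that the P6 objective is nondecreasing in $\hat{\delta}$, from which tightness of the constraint $\delta(\mathbf{W})\le\hat{\delta}$ at optimality (and hence equivalence with P5) follows; your version simply spells out the two-direction argument more explicitly than the paper's three-sentence proof. One minor remark: you describe $\E\norm{\mE}^2_F$ and $\E\norm{\mE\1}^2$ as ``$\mathbf{W}$-independent,'' which is not literally true (see Proposition~\ref{Corollary:MSE}), but this is harmless since in your monotonicity step $\mathbf{W}$ is held fixed and only $\hat{\delta}$ varies.
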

	\begin{proof}
		From {Proposition~\ref{monotonic}},  the objective function \eqref{sub22_ojb1} monotonically increases  with respect to $\hat{\delta}$. So $\hat{\delta}$ can always be decreased to reduce the objective value, and consequently the constraint ${\delta(\mathbf{W})}\leq \hat{\delta}$ must hold with equality at the optimal point of P6. Therefore, problem P6 is equivalent to P5 without loss of optimality.
	\end{proof}
	We now optimize $\mW$ and $\hat{\delta}$ in an alternating manner.
	\subsubsection{Optimizing $\mW$ for fixed $\hat{\delta}$} For a fixed $\hat{\delta}$, the problem P6 reduces to
	\begin{align}
		{\text{(P7)}:}~\min_{\mW}&\quad \hat{d}(\mW,\ff,\uu)\label{sub22_ojb} ~~\text{s.t.~}\quad {\delta(\mathbf{W})}\leq \hat{\delta},~ \eqref{sub21_cons}.
	\end{align}
	where $\hat{d}(\mW,\ff,\uu)=9{G}(\hat{\delta})\E\norm{\mE}^2_F+\frac{1}{\lambda^2{M}^2}\E\norm{{\mE\1}}^2$.
	\begin{proposition}\label{sub_problem21}
		Problem P7 is a convex problem, which can be efficiently solved by e.g., interior-point method.
	\end{proposition}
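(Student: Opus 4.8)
The plan is to establish the two ingredients of convexity — a convex objective and a convex feasible set — and then invoke standard interior‑point theory.

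I would first check the objective. With $\ff$, $\uu$ and $\hat{\delta}$ held fixed, $G(\hat{\delta})$ is a positive constant and the only free variable is $\mW$. Writing $\mE^{(t)}=\mX^{(t)}\mW-\hat{\mX}^{(t+\frac12)}$ and inspecting the receiver operations \eqref{wholechannel}–\eqref{rece1}, the entries of $\hat{\mX}^{(t+\frac12)}$ depend on $\mW$ only through the affine terms $\sum_{j\in\cM_i}w_{ij}\bar x_j^{(t)}$ and $w_{i,i}\xx_i^{(t)}$, the channel‑ and noise‑dependent parts being independent of $\mW$; hence $\mE^{(t)}$ is an affine function of $\mW$, so for each noise realization $\norm{\mE^{(t)}}_F^2$ and $\norm{\mE^{(t)}\1}^2$ are convex quadratics in $\mW$, a property preserved by the expectation. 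The same conclusion can be read off the closed forms \eqref{E_t}–\eqref{E_t_1}: the purely quadratic parts of $\E\norm{\mE^{(t)}}_F^2$ and $\E\norm{\mE^{(t)}\1}^2$ are $2C\sum_p (v_p^{(t)})^2\sum_{i\in\cM_p}w_{ip}^2$ and $\frac{2C{M}^2}{n^2}\sum_p (v_p^{(t)})^2\big(\sum_{i\in\cM_p}w_{ip}\big)^2$ respectively — both sums of squares with nonnegative weights — while everything else is affine in $\{w_{ij}\}$. Since $\hat{d}=9G(\hat{\delta})\,\E\norm{\mE}_F^2+\frac{1}{\lambda^2{M}^2}\,\E\norm{\mE\1}^2$ is a nonnegative combination of these convex quadratics, it is convex in $\mW$.

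Next, and this is the main step, I would show the feasible set is convex. The constraints in \eqref{sub21_cons} — the sparsity pattern $w_{ij}=0$ off $\mathcal{E}$, the symmetry $\mW^{\mathrm T}=\mW$, the row sums $\mW\1=\1$, and the box $\mW\in[0,1]^{M\times M}$ — are all affine and thus convex. For $\delta(\mW)\le\hat{\delta}$ I would use that, under Assumption 1, $\1/\sqrt{M}$ is a unit eigenvector of $\mW$ for the eigenvalue $\lambda_1(\mW)=1$, so $\mW-\frac1M\1\1^{\mathrm T}$ annihilates $\1$ and agrees with $\mW$ on $\1^{\perp}$; hence its spectrum is $\{0\}\cup\{\lambda_i(\mW):2\le i\le M\}$ and $\norm{\mW-\frac1M\1\1^{\mathrm T}}_2=\max\{|\lambda_2(\mW)|,|\lambda_M(\mW)|\}=\sqrt{\delta(\mW)}$. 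Therefore $\delta(\mW)\le\hat{\delta}$ is equivalent to $\norm{\mW-\frac1M\1\1^{\mathrm T}}_2\le\sqrt{\hat{\delta}}$, i.e., to the two linear matrix inequalities
\begin{align}
\sqrt{\hat{\delta}}\,\mathbf{I}+\mW-\frac1M\1\1^{\mathrm T}\succeq\mathbf{0},\qquad \sqrt{\hat{\delta}}\,\mathbf{I}-\mW+\frac1M\1\1^{\mathrm T}\succeq\mathbf{0}. \notag
\end{align}
For fixed $\hat{\delta}$ these are LMIs in $\mW$, so they define a convex set, and the feasible region of P7 is the intersection of convex sets. Combining the two parts, P7 minimizes a convex quadratic over a convex set described by affine constraints and LMIs — a convex program, indeed semidefinite‑representable — to which interior‑point methods apply with polynomial complexity. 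The one delicate point is the spectral reformulation of $\delta(\mW)\le\hat{\delta}$: the constraint looks nonconvex as a bound on a squared maximum of eigenvalue magnitudes, and the reduction to LMI form relies essentially on the symmetric doubly‑stochastic structure that pins the leading eigenpair at $(1,\1)$ and lets the rank‑one term $\frac1M\1\1^{\mathrm T}$ be removed cleanly; once that is done, convexity of all remaining pieces is routine.
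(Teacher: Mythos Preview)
Your proof is correct. The treatment of the objective matches the paper's: both observe that the quadratic parts of $\E\norm{\mE}_F^2$ and $\E\norm{\mE\1}^2$ are sums of squares in the entries $w_{ij}$ (the paper singles out $\sum_p v_p^2\big(\sum_{i\in\cM_p}w_{ip}\big)^2$ explicitly), while the remaining terms are affine. The handling of the spectral constraint, however, is genuinely different. The paper invokes the variational characterization $\sum_{i=1}^{k}\lambda_i(\mX^2)=\sup\{\norm{\mX\mV}_F^2:\mV^{\mathrm T}\mV=\mathbf I\}$, a pointwise supremum of convex functions, and then uses $\delta(\mW)=\sum_{i=1}^{2}\lambda_i(\mW^2)-1$ on the doubly stochastic set. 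You instead subtract the rank-one projector $\frac{1}{M}\1\1^{\mathrm T}$ to kill the leading eigenpair and rewrite $\delta(\mW)\le\hat\delta$ as the spectral-norm bound $\norm{\mW-\frac{1}{M}\1\1^{\mathrm T}}_2\le\sqrt{\hat\delta}$, equivalently a pair of LMIs. Your route is the one used in the fastest-mixing-Markov-chain literature and has the practical advantage of yielding an explicit SDP formulation ready for an interior-point solver; the paper's Ky~Fan-type argument is slightly more abstract but makes the convexity of $\delta(\cdot)$ itself transparent without passing through an equivalent reformulation. Both arguments rely on the affine constraints $\mW^{\mathrm T}=\mW$, $\mW\1=\1$ to pin the top eigenpair, so neither proves that $\delta(\mW)\le\hat\delta$ is convex in isolation---only in conjunction with \eqref{sub21_cons}, which is all that is needed.
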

	\begin{proof}
		Please refer to Appendix \ref{app_sub_problem21}.
	\end{proof}
	\subsubsection{Optimizing $\hat{\delta}$ with fixed $\mW$} With  fixed $\mW$,
	due to the monotonicity of $\hat{\delta}$ in the objective function P6, $\hat{\delta}$ can be directly updated by $\hat{\delta}={\delta(\mathbf{W})}$ in each iteration.
		\vspace{-1em}
	\subsection{Overall Algorithm for Optimizing  $\{\mW,\ff,\uu\}$}
	We summarize the proposed algorithm for optimizing  $\{\mW,\ff,\uu\}$ as Algorithm 2.

	\begin{algorithm}[h]
		\caption{ AO Algorithm for Optimizing $\{\mW,\ff,\uu\}$} 
		\label{alg_beamforming} 
		\begin{algorithmic}[1] %这个1 表示每一行都显示数字
			\REQUIRE  \{$\mathcal{M}_i,i \in [M] $\},$\{ \mathbf{H}_{\langle i, j \rangle},  |  i \in [M], j \in [M]\}$, $J_{\max}$, $I_{1{\max}}$ and $I_{2{\max}}$. 
			\STATE {\textbf{Initialization:}}  $\mathbf{f}$, $\mathbf{u}$ and $\mW$.
			\FOR{ $j \in [J_{\max}]$ }
			\FOR{$i_1 \in [I_{1{\max}}]$}
			\STATE Compute $\mathbf{M}=\{\mM_p\}_{p=1}^M$ and $\mathbf{n}=\{\nn_p\}_{p=1}^M$ based on \eqref{M_p} and \eqref{n_p}
			\STATE Optimize $\mathbf{u}=\{\uu_p\}_{p=1}^M$,  by solving (P4);
			\FOR{$p \in [M]$ }
			\STATE Compute $\mathbf{A}_p$ and $\mathbf{b}_{p}$ based on \eqref{A_p} and \eqref{b_p} ; 
			\STATE Update $\ff_{p}$ by the closed-form solution $\ff_{p}^\star=2\mathbf{A}_{p}^{-1}\mathbf{b}_{p}$
			\ENDFOR
			\ENDFOR
			\FOR{$i_2 \in [I_{2{\max}}]$}
			\STATE Optimize $\mW$ by solving (P7)
			\STATE Updates $\hat{\delta}$ based on $\hat{\delta}={\delta(\mathbf{W})}$;
			\ENDFOR
			\ENDFOR
			\ENSURE $\{\mW,\mathbf{f}, \mathbf{u}\}$.
		\end{algorithmic}
	\end{algorithm}
		
	Note that when executing Algorithm 2, we do not need to estimate parameters $Q$ and $R$  as defined in problem (P1), which simplifies our algorithm.
	Furthermore, the weights of error terms $\E\norm{\mE}^2_F$ and $E\norm{{\mE\1}}^2$ in problem (P1) are based on hypothetical parameters. It may be challenging to estimate the appropriate parameters for each specific MIMO OA-DFL scenario. To enhance the robustness of the algorithm, we can use $(\frac{1}{{\lambda}^2M}+{9G(\mathbf{W})})\E\norm{\mE}^2_F$ to substitute $9G(\mathbf{W})\E\norm{\mE}^2_F+\frac{1}{\lambda^2{M}^2}\E\norm{{\mE\1}}^2$ in (P1) by noting $\E\norm{{\mE\1}}^2\leq {M}\E\norm{\mE}^2_F$.
	
	We now provide a concise discussion of the computational complexity associated with Algorithm 2. In this algorithm, both problem (P4) and problem (P7) are convex problems, making them amenable to solution using existing optimization solvers based on interior-point methods. Consequently, the worst-case complexity of Algorithm 2 can be expressed as
	$\mathcal{O}(J_{max}(I_{1max}M{N}^{3.5}+I_{2max}M^7))$, where  
	$N =N_\text{T}$ denotes the number of transmit antennas of each device,
	$J_{\max}$ denotes the maximum iteration times for  Algorithm 2, $I_{1{\max}}$ represents the maximum iteration times for solving the beamformers optimization subproblem (as described in Section \ref{sys_opt}-B), and $I_{2{\max}}$ signifies the maximum iteration times for solving the mixing matrix optimization subproblem (as described in Section \ref{sys_opt}-C).

	\section{Simulation Results}\label{sec-simluation}
	\subsection{Simulation Under Error Free Case}
	To start with, we conduct experiments to verify the convergence result in {Proposition 1}.  To analyze the impact of the second-largest squared eigenvalue of mixing matrix on the system performance, we consider an error-free case and perform DFL training with different mixing matrices. The training process  is  illustrated in Section \ref{section-system-model}-A where we have $\hat{\xx}_{i}^{(t+\frac{1}{2})} = \xx_{i}^{(t + \frac{1}{2})},\forall i \in [M]$ in \eqref{update}.
	
	We perform the learning task of image classification on the MNIST dataset \cite{deng2012mnist}.  We use 20k samples to train the model and 10k samples for validation from the original data set. The heterogeneous data splitting scheme in \cite{mcmahan2017communication} is implemented. 
	To be specific, there are 10 classes in the MNIST dataset so we divide the devices into 10 equally sized groups, with each group of devices evenly assigned disjoint data samples from a specific class.
	For the network configuration, we train a convolutional neural network (CNN) with two $5\times5$ convolution layers (separately with 10 and 20 channels and each followed by $2\times2$ max pooling), a subsequent batch normalization layer, a fully connected layer containing 50 units with ReLu activation and a final softmax output layer. The network has 21880 parameters in total. The cross-entropy loss is used as the loss function.
	
	In Fig.~\ref{fig_errorfree}, we plot the minimum test accuracy (among all devices) and the average test accuracy (of the global model average) with different choices of the mixing matrix over 150 communication rounds. We randomly generate the different mixing matrices satisfying Assumption 1 by using the convex optimization tool CVXPY\cite{diamond2016cvxpy}. The mixing matrix with ${\delta(\mathbf{W})} =0$ corresponds to the fully connected structure where the value of each element is $1/M$.  We set the number of devices $M=30$, learning rate $=0.02$,  momentum $=0.9$ and the results are averaged over 30 Monte Carlo trials.
	\begin{figure}[htbp]
		\vspace{-1em}
		\centering
		\includegraphics[width=1\linewidth]{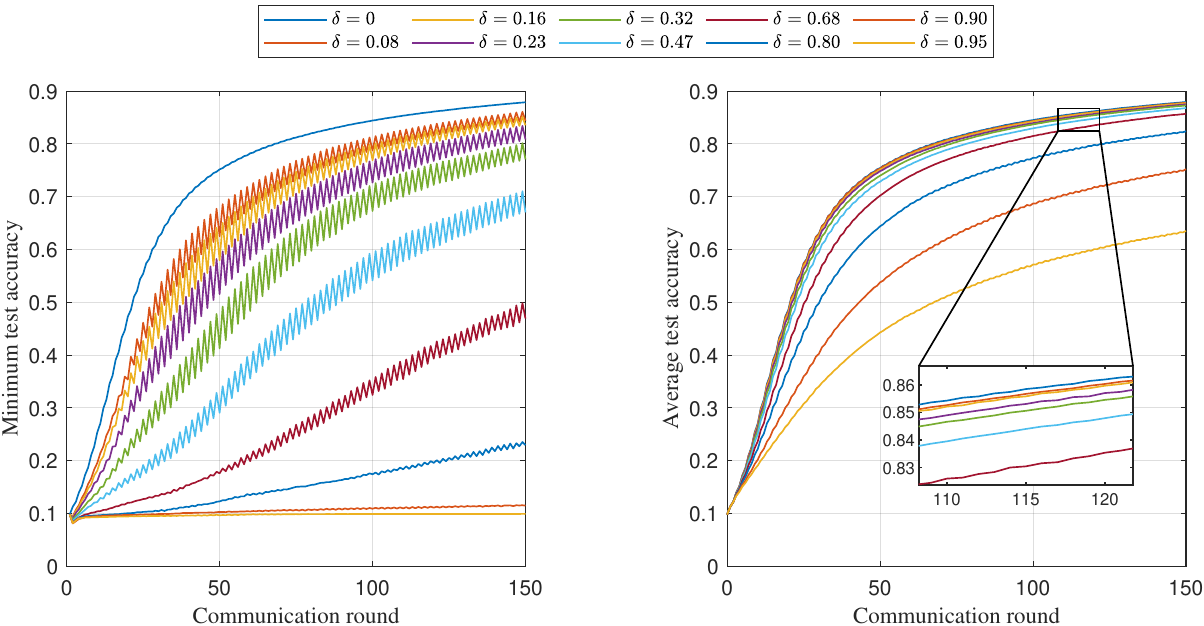}
			\vspace{-1em}
		\caption{Minimum and average test accuracy versus communication round for different choices of the mixing matrix.}
		\label{fig_errorfree}
		\vspace{-1em}
	\end{figure}
	
	As illustrated in Fig.~\ref{fig_errorfree}, we observe that the test accuracy gradually deteriorates as the increase of ${\delta(\mathbf{W})}$ in both subgraphs, which matches our analysis in {Proposition~\ref{theorem1}} well.  For the test accuracy of the global model average (right subgraph), the accuracy gaps between adjacent ${\delta(\mathbf{W})}$ are relatively narrow, especially when the value of ${\delta(\mathbf{W})}$ is small (less than 0.32). However, these gaps become larger in terms of the minimum test accuracy (left subgraph). In the left subgraph, we see that only the minimum accuracy of ${\delta(\mathbf{W})}=0$ (fully connected) can keep close to the accuracy curve of the global model average. For  ${\delta(\mathbf{W})}$ more than $0.8$, the worst-case learning performance in the left subgraph is prominently poor (less than 0.3). 
	This is because for the system with high ${\delta(\mathbf{W})}$, there are significant discrepancies among the local models, resulting in extremely poor performance for some devices.
	Therefore, the second-largest squared eigenvalue ${\delta(\mathbf{W})}$ has a significant impact on the consensus performance.
	
	\vspace{-1em}
	\subsection{Performance of Proposed Algorithm Under Various Settings}
	In this subsection, we study the performance of the proposed algorithm in different network topologies and communication configurations. 
	We utilize the sparsity level of the mixing matrix as a characterization metric for different network topologies. The sparsity level is determined by the proportion of absent communication links, expressed as the ratio of the number of zero elements to the total number of elements in the mixing matrix, i.e., $\frac{\text{number of 0 elements}}{M^2}$. To create different network topologies, we randomly generate the corresponding number of zero elements in the mixing matrix.
	By employing this approach, we obtain network topologies with different sparsity levels and compare the performance of the proposed algorithm under four specific sparsity levels: 0\%, 30\%, 60\%, and 90\%. A sparsity level of 0\% corresponds to a fully connected topology, where all communication links are present. A sparsity level of 30\% encompasses topologies with relatively dense communication links. A sparsity level of 60\% covers relatively sparse topologies, and a sparsity level of 90\% captures extremely sparse network topologies, such as a ring or line topology.
	
	Furthermore, we conduct a comparison between the proposed algorithm and conventional centralized FL \cite{tran2019federated} in the decentralized network. In this scenario, a centrally located device coordinates the other devices, resulting in a communication structure resembling a star topology. It is important to note that centralized FL imposes strict chronological requirements, where the central device can only broadcast the model after aggregating the local models sequentially, i.e., in an uplink and downlink fashion. Consequently, the communication latency of centralized FL is twice that of MIMO OA-DFL, even for the same number of training rounds.
	We model the communication channels as independently and identically distributed (i.i.d.) Rayleigh fading, and the signal-to-noise ratio (SNR) at the transmitter side, defined as $P_0/\sigma_n^2$, is set equal for all devices. The learning configuration remains the same as the one described in Section \ref{sec-simluation}-A.
	
	To implement full-duplex over-the-air model aggregation, multiple antennas are necessary to provide sufficient degrees of freedom (DoF) for optimization. Therefore, we initially investigate the impact of the number of transmitter and receiver antennas, where the number of transmit (Tx) and receive (Rx) antennas are equal.    Unless otherwise specified, we adopt the following default settings: training round  $T=150$, the number of devices $M=30$, transmitter SNR $=20$ dB, maximum transmission power $P_0=1$ W, ${N_\text{T}}= {N_\text{R}}=20$, optimization-related parameters $J_{\max}=20$, $I_{1{\max}}=50$, $I_{2{\max}}=50$, $\lambda=0.02$, and $\omega=0.1$. The results are averaged over 30 Monte Carlo trials.
	\begin{figure}[htbp]
		\vspace{-1em}
		\centering
		\includegraphics[width=1\linewidth]{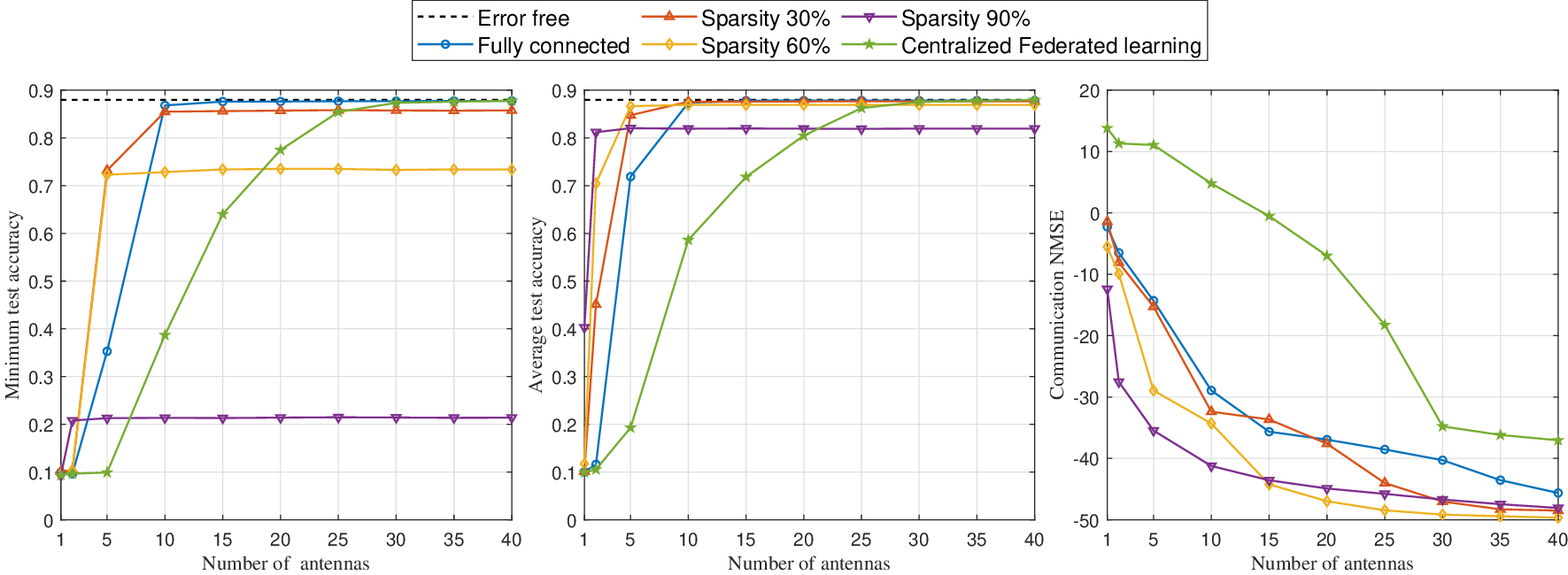}
			\vspace{-1em}
		\caption{Minimum and average test accuracy as well as communication NMSE (dB) versus the number of antennas in different topologies.}
		\label{diff_ntx}
		\vspace{-1em}
	\end{figure}

	In Fig.~\ref{diff_ntx}, we investigate the relationship between the Tx/Rx antenna size and three performance metrics: minimum test accuracy, average test accuracy, and communication normalized mean square error (NMSE). The communication NMSE is obtained by averaging the NMSE across different training rounds.
	%These results are obtained from a training round of $T=150$.
	We see from Fig.~\ref{diff_ntx} that sparse topology can achieve the highest training accuracy compared to dense topology when the number of antennas is relatively low. However, when the system has sufficient antennas, the performance of sparse topology is inferior to that of dense topology, which is particularly pronounced in terms of the minimum test accuracy (left subgraph).  For the communication NMSE, topologies with sparser structures exhibit less communication error, while denser topologies demonstrate higher error. Additionally, centralized FL suffers from significant errors due to its heavy reliance on the central device. A detailed analysis of the NMSE sheds light on the learning performance behaviors.
	
	Regarding test accuracy, high sparsity topologies (60\% and 90\%) achieve excellent average accuracy with minimal requirements. However, their performance in terms of minimum accuracy is poor, which is resulting from the limitation imposed by the mixing matrix where high ${\delta(\mathbf{W})}$ results in significant discrepancies in the local parameters from the global model average. On the other hand, non-sparse topologies (30\% and fully connected) exhibit a gradual increase in accuracy with the growth of the number of antennas. When the number of antennas is sufficient ($\geq 30$), the accuracies of fully connected topology and centralized FL are the same, achieving the performance consistent with the error free bound.
	Antenna requirements and topological sparsity represent a fundamental trade-off, and the optimal scenario involves achieving excellent performance with lower requirements, such as 30\% sparsity with 10 Tx/Rx antennas.
	
	We then evaluate the system performance versus transmitter-side SNR. As shown in Fig.~\ref{diff_SNR}, centralized FL continues to exhibit the highest error in terms of communication NMSE, and it struggles to perform model training effectively at low SNR regions (below $-10$ dB) due to substantial transmission errors.
	Notably, under the default settings, dense topologies exhibit better performance, consistently outperforming the sparse topologies across all transmitter-side SNR regions.
	\begin{figure}[h]
		\vspace{-1em}
		\centering
		\includegraphics[width=1\linewidth]{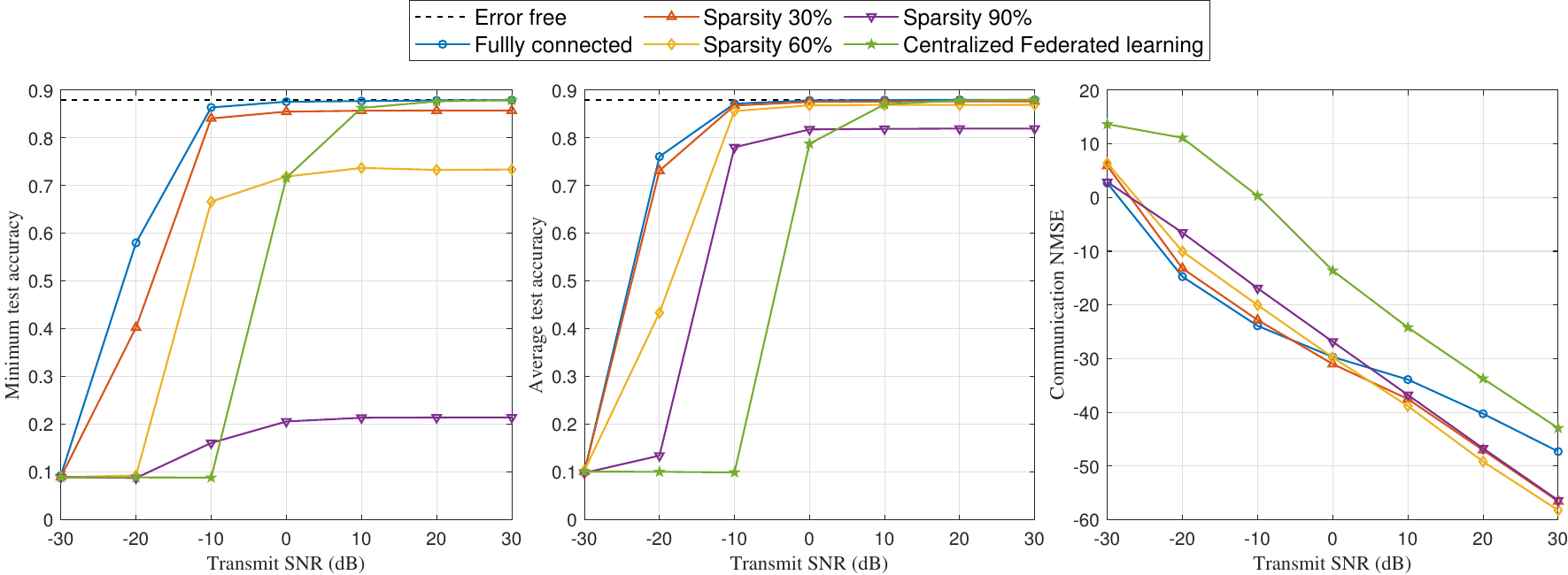}
			\vspace{-1em}
		\caption{Minimum and average test accuracy as well as communication NMSE (dB) versus transmitter SNR in different topologies.}
		\label{diff_SNR}
		\vspace{-1em}
	\end{figure}
	
	In Fig.~\ref{diff_User}, we investigate the impact of the number of devices on the system performance. 
	Due to limited communication resources, we see that the communication NMSE increases as the number of devices grows. The non-sparse topologies generally experience higher  NMSE compared to the sparse topologies.  In terms of test accuracy, with the exception of the fully connected topology, both minimum and average accuracy improve as the number of devices increases, particularly for sparser topologies. This behavior is attributed to that the second-largest squared eigenvalue decreases as the number of devices increases for a fixed sparsity level, while the impact of communication error remains insignificant within this range of device numbers. These findings highlight that in scenarios with a large number of devices, sparser topologies offer advantages due to lower communication requirements and, consequently, a better trade-off between communication and learning.
	\begin{figure}[htbp]
		\vspace{-1em}
		\centering
		\includegraphics[width=1\linewidth]{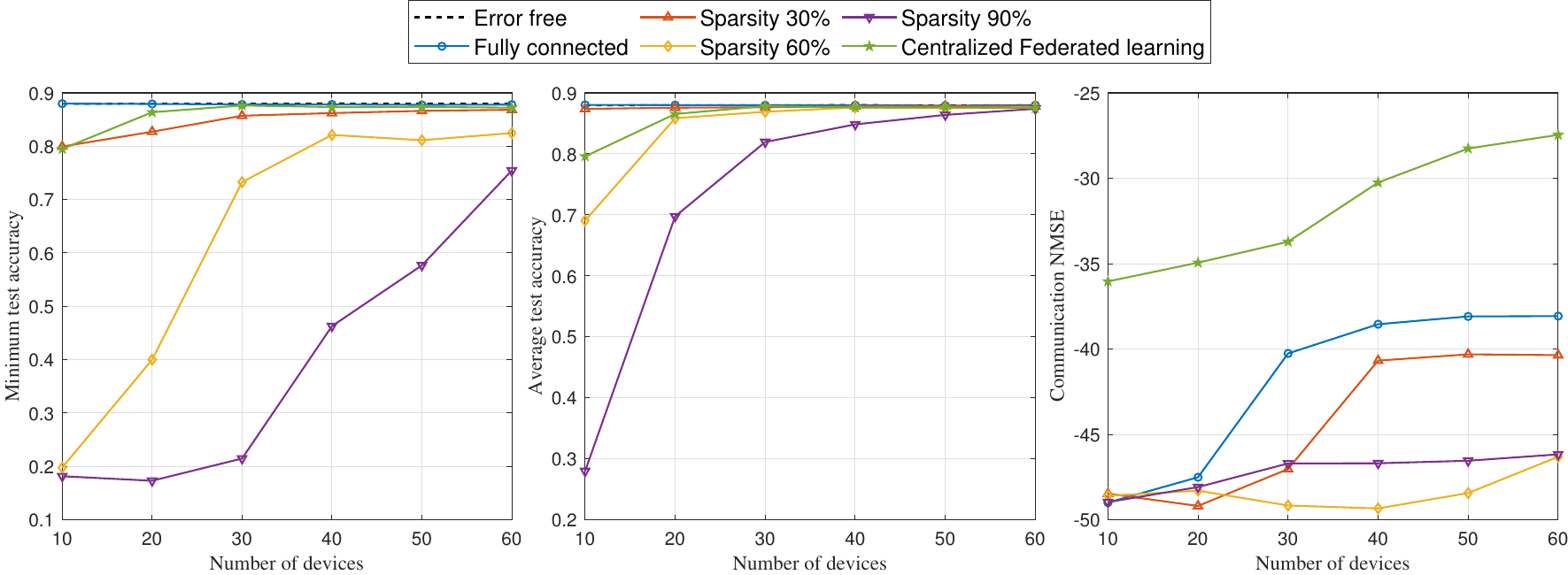}
			\vspace{-1em}
		\caption{Minimum and average test accuracy as well as communication NMSE (dB) versus number of devices in different topologies.}
		\label{diff_User}
			\vspace{-1em}
	\end{figure}
	\vspace{-1em}
	\subsection{Performance Comparison With Benchmarks}
	In this subsection, we present a comparison between the proposed algorithm and state-of-the-art schemes in terms of their performance under network topologies with sparsity levels of 30\% and 60\%. 
	The network topologies for  30\% and 60\% sparsity are shown in Fig.~\ref{0.3sp} and Fig.~\ref{0.6sp}, respectively. 

	\begin{figure}[htbp]
		\centering
		\begin{minipage}[t]{0.48\textwidth}
			\centering
			\includegraphics[width=6cm]{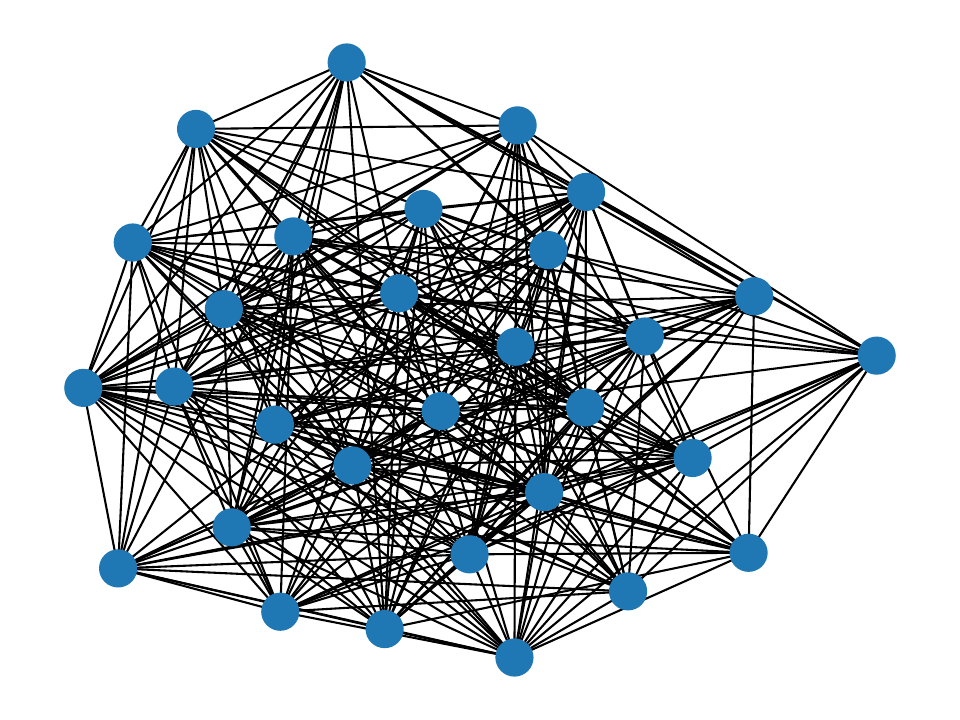}
			\caption{30\% Sparsity  topology}
			\label{0.3sp}
		\end{minipage}
		\begin{minipage}[t]{0.48\textwidth}
			\centering
			\includegraphics[width=6cm]{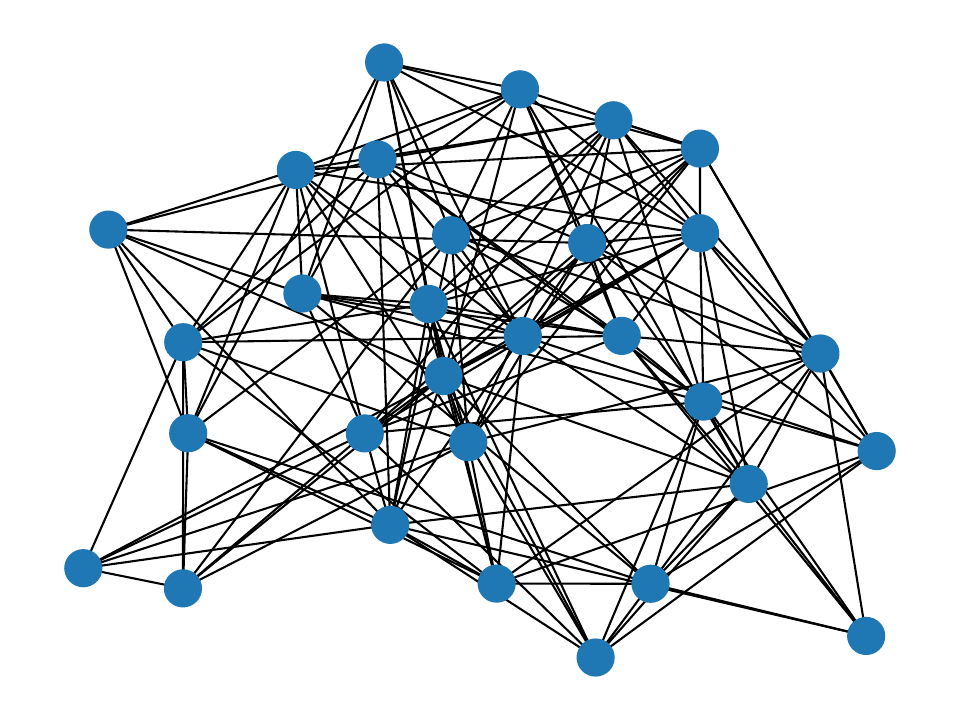}
			\caption{60\% Sparsity  topology}
			\label{0.6sp}
		\end{minipage}
	\vspace{-1em}
	\end{figure}
	
	The benchmarks to evaluate the performance of the proposed algorithm are as follows:
	\begin{itemize}
		\item \textbf{Joint optimization with separate over-the-air aggregation (JO with SOA)}: In this benchmark, each device sequentially acts as a central server to perform over-the-air aggregation in a time-division fashion during each training round. We jointly optimize the mixing matrix and beamformers, with the beamforming design being a special case of the over-the-air design presented in \cite{9955571}. It should be noted that the communication latency in this scheme is $M$-times larger than that of the proposed algorithm.
		\item \textbf{Digital communication without mixing matrix optimization (DC w/o MMO)}: In this benchmark, each model parameter is quantized to 16 bits and transmitted reliably with a channel capacity-achieving rate. During each training round, devices sequentially broadcast their model parameter to their neighbors, and a random mixing matrix is applied. The communication overhead in this scheme is significantly larger than that of our proposed algorithm due to the transmission protocol and capacity limitations.
		\item \textbf{Zero-forcing beamforming without mixing matrix optimization (ZFB w/o MMO)}:  In this benchmark, instead of minimizing the mean square error (MMSE), we optimize the transmit and receive beamforming using the zero-forcing criterion \cite{lin2022distributed}. The objective is to force the aggregated model parameter to approach the desired ground-truth value regardless of the channel noise. A random mixing matrix is applied in this scheme.
		\item  \textbf{MMSE beamforming without mixing matrix optimization (MB w/o MMO)}:
		In this benchmark, we optimize the beamforming vectors using our proposed algorithm with a given random mixing matrix.
		\item  \textbf{Error free communication with optimized mixing matrix (Error free case)}:  In this benchmark, we assume all communication channels are noiseless (i.e., $\sigma_n^2=0$). All devices exchange model parameters with perfect reliability and update their local model by $\hat{\xx}_{i}^{(t+\frac{1}{2})} = \xx_{i}^{(t + \frac{1}{2})},\forall i \in [M]$. We use the optimized mixing matrix (with the smallest possible second-largest squared eigenvalue). This scheme represents the optimal learning performance.
	\end{itemize}
	\begin{figure}[htbp]
			\vspace{-1em}
		\centering
		\begin{minipage}[t]{0.48\textwidth}
			\centering
			\includegraphics[width=8cm]{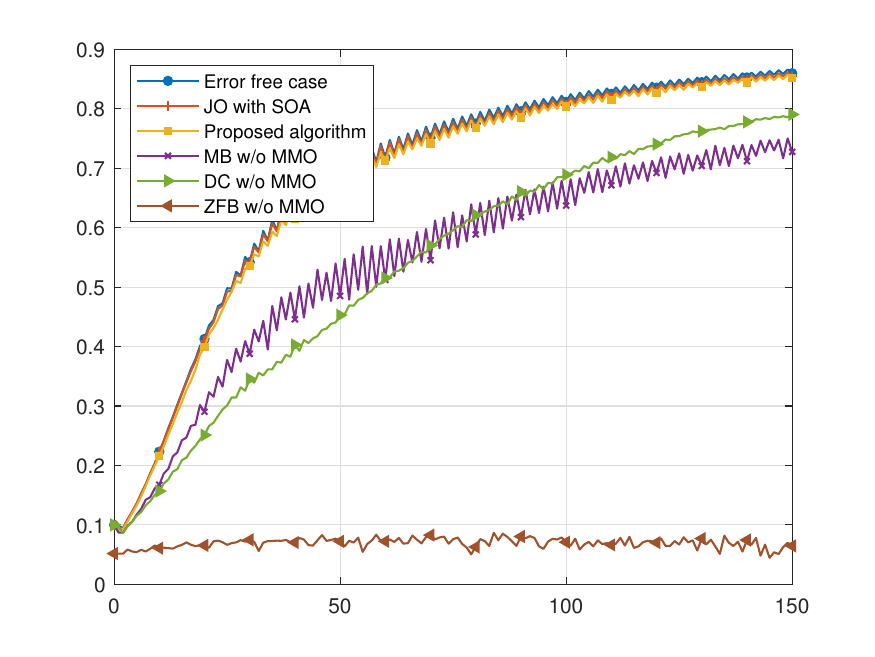}
		\end{minipage}
		\begin{minipage}[t]{0.48\textwidth}
			\centering
			\includegraphics[width=8cm]{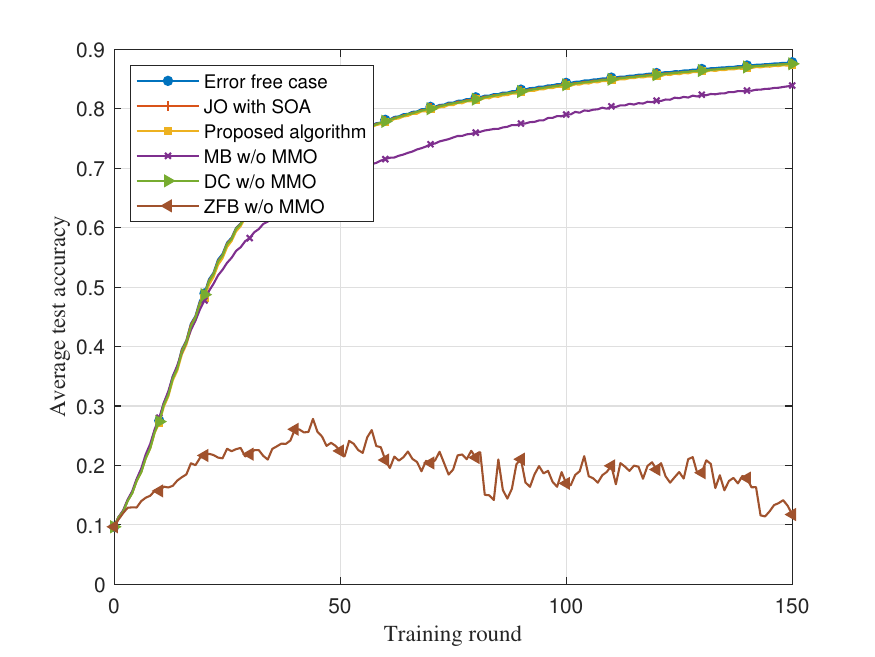}
		\end{minipage}
		\vspace{-1em}
		\caption{Minimum and average test accuracy versus training round under 30\% sparsity topology based on different schemes.}	
		\label{03sp_comparison}
		\centering
			\vspace{-1em}
	\end{figure}

	We conduct simulations with SNR set to 5dB, $M$ set to 30, and ${N_\text{T}}$ and ${N_\text{R}}$ set to 10 while keeping all other simulation setups the same as in Section \ref{sec-simluation}-B. The results are averaged over 30 Monte Carlo trials. We use the training round as the abscissa since different schemes require different communication times for one training procedure, where one training round corresponds to one DFL training process explained in Section II-A.
	
	In Fig.~\ref{03sp_comparison}, we compare the accuracy of the proposed algorithm with the benchmarks under 30\% sparsity topology. The results demonstrate that the proposed algorithm achieves nearly the same accuracy as the JO with SOA scheme while consuming $M=30$ times less communication time. Both of these schemes exhibit near-optimal performance as the error free case.
	Although DC w/o MMO scheme performs well in average accuracy, it lags in minimum accuracy due to the limited performance of the consensus, which is significantly impacted by the mixing matrix. Furthermore, the MB w/o MMO scheme suffers from both communication errors and significant discrepancies in model parameters, resulting in underperformance in both subgraphs. Moreover, ZFB w/o MMO scheme, without considering channel noise, does not perform well in this configuration.
	\begin{figure}[htbp]
			\vspace{-1em}
		\centering
		\begin{minipage}[t]{0.48\textwidth}
			\centering
			\includegraphics[width=8cm]{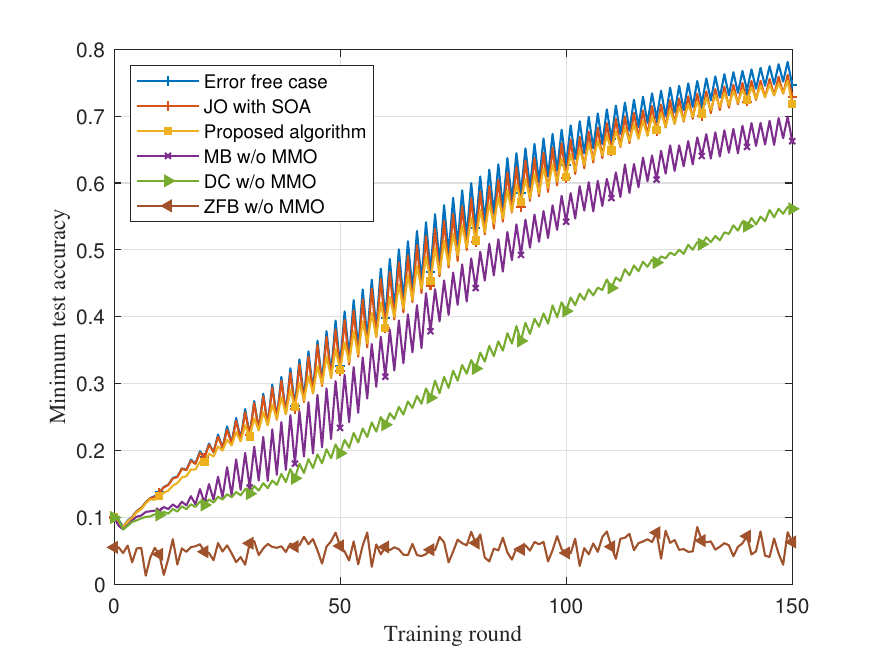}
		\end{minipage}
		\begin{minipage}[t]{0.48\textwidth}
			\centering
			\includegraphics[width=8cm]{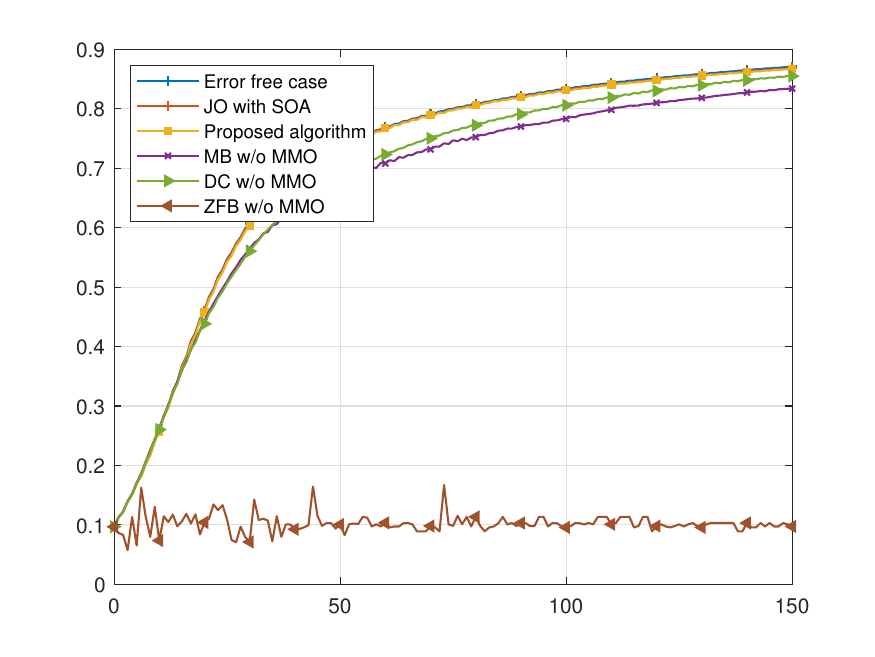}
		\end{minipage}
		\vspace{-1em}
		\caption{Minimum and average test accuracy versus training round under 60\% sparsity topology based on different schemes.}	
		\label{06sp_comparison}
		\centering
			\vspace{-1em}
	\end{figure}

	Fig.~\ref{06sp_comparison} provides a comparison of the results obtained under  60\% sparsity topology. Analyzing the minimum test accuracy (left subgraph), we observe a higher degree of fluctuation in the accuracy curve compared to the results under  30\% sparsity topology. This fluctuation can be attributed to the larger ${\delta(\mathbf{W})}$ of the mixing matrix under the 60\% sparsity topology.  The increased ${\delta(\mathbf{W})}$ leads to greater discrepancies among the local models and poses more challenges in achieving consensus.
	Furthermore, the proposed algorithm exhibits comparable performance to that of JO with SOA scheme and achieves near-optimal accuracy in both minimum and average accuracy cases. In contrast, the performance of other benchmarks significantly lags behind the proposed scheme due to the aforementioned reasons.
	\vspace{-1em}
	\section{Conclusions} \label{sec-conclusion}
	In this paper, we investigated the design of the  MIMO-OA DFL system over decentralized \emph{ad hoc} networks. We utilized a mixing matrix mechanism to promote consensus and leveraged wireless beamforming technique to improve communication quality. We derived a rigorous convergence bound in the MIMO-OA DFL scheme by capturing the impact of communication error on the decentralized learning performance. This provided a systematic attempt to characterize DFL performance considering both the learning and communication aspects. Based on this, we formulated a joint optimization problem with respect to transceiver beamformers and mixing matrix. We proposed a novel low-complexity AO algorithm to solve this problem. Finally, simulation results demonstrated the communication and learning trade-off in different topologies and verified the superiority of our proposed algorithm.
	
					\vspace{-1em}
	\appendices
	\section{Proof of Proposition~\ref{theorem1}}
	\label{Appendix_Prf_covergence1}
	To view the MIMO OA-DFL process from a global perspective, we represent the training steps in Section \ref{sec2a} using the matrix form. Denote the concatenation of the model parameter and the stochastic gradients of all devices in training round $t$ as
	\begin{align}
		{\mathbf{X}}^{(t)}\!\! \triangleq \!\!\left[{\xx}_{1}^{(t)},\dots, {\xx}_{M}^{(t)}~\right] \!\in\! \R^{D\times M},\partial F(\mX^{(t)},{\boldsymbol{\xi}}^{(t)}) \!\triangleq\!\!\left[\nabla F(\xx^{(t)}_1,\mathbf{\xi}^{(t)}_1),\dots,F(\xx^{(t)}_M,\mathbf{\xi}^{(t)}_M)\right] \!\!\in\! \R^{D\times M}.
	\end{align}
	We first consider the error free case training iteration, which  can be expressed as
	\begin{align}\label{errorfreecase}
		{\mX^{(t+1)}}^{\star}=\mX^{(t)}\mW-\lambda\partial F(\mX^{(t)},{\boldsymbol{\xi}}^{(t)}),
	\end{align}
	where ${\mX^{(t+1)}}^{\star}$ denotes the desired model parameter matrix at round $t+1$  and $\mW$ is the mixing matrix. Due to the communication error, the MIMO OA-DFL iteration is 
	\begin{align}\label{errorcase_glo}
		\mX^{(t+1)}=\hat{\mX}^{(t+\frac{1}{2})}-\lambda\partial F(\mX^{(t)},{\boldsymbol{\xi}}^{(t)}),
	\end{align}
	where $\hat{\mX}^{(t+\frac{1}{2})}\triangleq \left[{\hat{\xx}}_{1}^{(t+\frac{1}{2})},\dots, \hat{\xx}_{M}^{(t+\frac{1}{2})}~\right] \in \R^{D\times M}$ denotes the  practical received signal matrix. By comparing with equation \eqref{errorfreecase}, the MIMO OA-DFL iteration \eqref{errorcase_glo} can be rewritten as 
	\begin{align}
		\mX^{(t+1)}&=\mX^{(t)}\mW- \lambda\partial F(\mX^{(t)},{\boldsymbol{\xi}}^{(t)})-\mE^{(t)} ,
	\end{align}
	where $\mE^{(t)}\triangleq \mX^{(t)}\mW-\hat{\mX}^{(t+\frac{1}{2})} \in \R^{D\times M}$ represents the communication error. 
	
	Under Assumptions 1-3, with $\lambda \leq {1}/{\omega}$, 
	we have
	\begin{small} 
		\begin{align}
			&\E f\left(\frac{\mX^{(t+1)}\1}{M}\right)
			=\E f\left(\frac{\mX^{(t)}\mW\1}{M}-\lambda \frac{(\partial F(\mX^{(t)}, {\boldsymbol{\xi}}^{(t)})+\tilde{\mE}^{(t)})\1}{M}\right)\nonumber\\
			%	=& \E f\left(\frac{\mX^{(t)}\1}{M}-\lambda \frac{(\partial F(\mX^{(t)}, {\boldsymbol{\xi}}^{(t)})+\tilde{\mE}^{(t)})\1}{M}\right)\\
			\overset{(a)}{\leq}&\E f\left(\frac{\mX^{(t)}\1}{M}\right)-\lambda \E \bigg\langle\nabla f\left(\frac{\mX^{(t)}\1}{M}\right),\frac{(\partial F(\mX^{(t)}, {\boldsymbol{\xi}}^{(t)})+\tilde{\mE}^{(t)})\1}{M}\bigg\rangle +\frac{\omega\lambda^2}{2}\E\norm{\frac{(\partial F(\mX^{(t)}, {\boldsymbol{\xi}}^{(t)})+\tilde{\mE}^{(t)})\1}{M}}^2\nonumber\\
			\overset{(b)}{=}& \E f\left(\frac{\mX^{(t)}\1}{M}\right)-\frac{\lambda}{2}\E\norm{\nabla f\left(\frac{\mX^{(t)}\1}{M}\right)}^2-\frac{\lambda}{2}\E\norm{\frac{\partial F(\mX^{(t)} {\boldsymbol{\xi}}^{(t)})\1}{M}+\frac{\tilde{\mE}^{(t)}\1}{M}}^2\nonumber\\&+\frac{\lambda}{2}\E\norm{\nabla f\left(\frac{\mX^{(t)}\1}{M}\right)-\frac{\partial F(\mX^{(t)}, {\boldsymbol{\xi}}^{(t)})\1}{M}-\frac{\tilde{\mE}^{(t)}\1}{M}}^2+\frac{\omega\lambda^2}{2}\E\norm{\frac{\partial F(\mX^{(t)}, {\boldsymbol{\xi}}^{(t)})\1}{M}+\frac{\tilde{\mE}^{(t)}\1}{M}}^2\nonumber\\
			%	\overset{(c)}{\leq}&f\left(\frac{\mX^{(t)}\1}{M}\right)-\frac{\lambda}{2}\E\norm{\nabla f\left(\frac{\mX^{(t)}\1}{M}\right)}^2+\frac{\lambda}{2}\E\norm{\nabla f\left(\frac{\mX^{(t)}\1}{M}\right)-\frac{\partial F(\mX^{(t)}, {\boldsymbol{\xi}}^{(t)})\1}{M}-\frac{\tilde{\mE}^{(t)}\1}{M}}^2\\
			\overset{(c)}{\leq}&f\left(\frac{\mX^{(t)}\1}{M}\right)-\frac{\lambda}{2}\E\norm{\nabla f\left(\frac{\mX^{(t)}\1}{M}\right)}^2+\lambda\E\norm{\nabla f\left(\frac{\mX^{(t)}\1}{M}\right)-\frac{\partial F(\mX^{(t)}, {\boldsymbol{\xi}}^{(t)})\1}{M}}^2+\lambda\E\norm{\frac{\tilde{\mE}^{(t)}\1}{M}}^2,\label{ini_iter}
		\end{align}
	\end{small}where $\tilde{\mE}^{(t)} \triangleq\mE^{(t)}/\lambda$, $(a)$ is based on the Assumption \ref{as1}-\ref{as2}, $(b)$ is because $2\langle \aa,\bb\rangle=\norm{\aa}^2+\norm{\bb}^2-\norm{\aa-\bb}^2$, and $(c)$ is from  $\lambda \leq {1}/{\omega}$ and $\norm{\sum_{i = 1}^n \aa_i}^2 \leq n \sum_{i = 1}^n \norm{\aa_i}^2$.  From \cite[Eq. (10)]{lian2017can}, the term $\small \E\norm{\nabla f\left(\frac{\mX^{(t)}\1}{M}\right)-\frac{\partial F(\mX^{(t)}, {\boldsymbol{\xi}}^{(t)})\1}{M}}^2$ can be bounded as
	\begin{small}
		\begin{align}
			\E\norm{\nabla f\left(\frac{\mX^{(t)}\1}{M}\right)-\frac{\partial F(\mX^{(t)}, {\boldsymbol{\xi}}^{(t)})\1}{M}}^2	{\leq}\frac{\omega^2}{M}\sum_{i=1}^{M}\E\norm{\frac{\mX^{(t)}\1}{M}-\mX^{(t)}\ee_i}^2+\frac{\alpha^2}{M}.\label{24}
		\end{align}
	\end{small}We define  $\frac{1}{M}\sum_{i=1}^{M}{\E\norm{\frac{\mX^{(t)}\1}{M}-\mX^{(t)}\ee_i}^2}$ as the agreement error in round $t$, which is the main obstacle in the decentralized convergence analysis. We start by bounding $\small \Xi^{(t)}_i\triangleq{\E\norm{\frac{\mX^{(t)}\1}{M}-\mX^{(t)}\ee_i}^2}$:
	\vspace{-1em}
	\begin{small}
		\begin{align}	
			\Xi^{(t)}_i=&
			\E\norm{\frac{\mX^{(t-1)}{\mW}\1-\lambda\left(\partial F(\mX^{(t-1)}, {\boldsymbol{\xi}}^{(t-1)})+\tilde{\mE}^{(t)}\right)\1}{M}-(\mX^{(t-1)}{\mW}{\ee_i}-\lambda(\partial F(\mX^{(t-1)}, {\boldsymbol{\xi}}^{(t-1)})+\tilde{\mE}^{(t)}){\ee_i})}^2,\notag\\
			%			=&\E\norm{\frac{\mX^{(0)}\1-\sum_{i=0}^{t-1}\lambda\left(\partial F({\mX^{(i)}}, {\boldsymbol{\xi}}^{(i)})+\tilde{\mE}^{(i)}\right)\1}{M}-\left(\mX^{(0)}{\mW}^k{\ee_i}-\sum_{j=0}^{t-1}\lambda(\partial F({\mX^{(j)}}, {\boldsymbol{\xi}}^{(j)})+\tilde{\mE}^{(j)}){\mW^{t-j-1}}{\ee_i}\right)}^2\notag\\
			%	=&\E\norm{\mX^{(0)}\left(\frac{\1}{M}-{\mW}^k{\ee_i}\right)-\sum_{j=0}^{t-1}\lambda\left(\partial F({\mX^{(j)}}, {\boldsymbol{\xi}}^{(j)})+\tilde{\mE}^{(j)}\right)\left(\frac{\1}{M}-{\mW^{t-j-1}}{\ee_i}\right)}^2\notag\\
			%	\overset{(a)}{=}&\E\norm{\sum_{j=0}^{t-1}\lambda\left(\partial F({\mX^{(j)}}, {\boldsymbol{\xi}}^{(j)})+\tilde{\mE}^{(j)}\right)\left(\frac{\1}{M}-{\mW^{t-j-1}}{\ee_i}\right)}^2\notag\\
			{=}&\lambda^2\E\norm{\sum_{j=0}^{t-1}\left(\partial F({\mX^{(j)}}, {\boldsymbol{\xi}}^{(j)})-\partial f({\mX^{(j)}})+\partial f({\mX^{(j)}})+\tilde{\mE}^{(j)}\right)\left(\frac{\1}{M}-{\mW^{t-j-1}}{\ee_i}\right)}^2,\notag\\	
			\leq&3\lambda^2{\E\norm{\sum_{j=0}^{t-1}\left(\partial F({\mX^{(j)}}, {\boldsymbol{\xi}}^{(j)})-\partial f({\mX^{(j)}})\right)\left(\frac{\1}{M}-{\mW^{t-j-1}}{\ee_i}\right)}^2},\notag\\+&3\lambda^2{\E\norm{\sum_{j=0}^{t-1}\partial f({\mX^{(j)}})\left(\frac{\1}{M}-{\mW^{t-j-1}}{\ee_i}\right)}^2}+3\lambda^2{\E\norm{\sum_{j=0}^{t-1}\tilde{\mE}^{(j)}\left(\frac{\1}{M}-{\mW^{t-j-1}}{\ee_i}\right)}^2}, \label{Qk_bound1}
		\end{align}
	\end{small}where we simp{\tiny }lify the derivation by assuming  $\mX^{(0)}=0$. For the  first  term on the RHS of inequality \eqref{Qk_bound1}, we have
	\begin{small}
		\begin{align}
			&\E\norm{\sum_{j=0}^{t-1}\left(\partial F({\mX^{(j)}} {\boldsymbol{\xi}}^{(j)})-\partial f({\mX^{(j)}})\right)\left(\frac{\1}{M}-{\mW^{t-j-1}}{\ee_i}\right)}^2,\nonumber\\
			%	\leq&\sum_{j=0}^{t-1}\E\norm{\left(\partial F({\mX^{(j)}}, {\boldsymbol{\xi}}^{(j)})-\partial f({\mX^{(j)}})\right)}^2\norm{\left(\frac{\1}{M}-{\mW^{t-j-1}}{\ee_i}\right)}^2\nonumber\\
			\leq&\sum_{j=0}^{t-1}\E\norm{\left(\partial F({\mX^{(j)}}, {\boldsymbol{\xi}}^{(j)})-\partial f({\mX^{(j)}})\right)}^2_F\norm{\left(\frac{\1}{M}-{\mW^{t-j-1}}{\ee_i}\right)}^2
			{\leq}\frac{M\alpha^2}{1-{\delta(\mathbf{W})}},\label{termT1}
		\end{align}
	\end{small}where the last inequality is due to Assumption 3.  By following the analysis in \cite{lian2017can},  the second  term on the RHS of \eqref{Qk_bound1} can be bounded as
	\begin{small}
		\begin{align}
			&\E\norm{\sum_{j=0}^{t-1}\partial f({\mX^{(j)}})\left(\frac{\1}{M}-{\mW^{t-j-1}}{\ee_i}\right)}^2\leq3\sum_{j=0}^{t-1}\sum_{h=1}^{M}\E \omega^2	{\Xi^{(j)}_h}\norm{\left(\frac{\1}{M}-{\mW^{t-j-1}}{\ee_i}\right)}^2+3\sum_{j=0}^{t-1}\E \norm{\nabla f\left(\frac{{\mX^{(j)}}\1}{M}\right)\1^\top}^2\nonumber\\
			&\norm{\left(\frac{\1}{M}-{\mW^{t-j-1}}{\ee_i}\right)}^2
			\!\!+\!6\sum_{j=0}^{t-1}\left(\sum_{h=1}^{M}\E \omega^2	{\Xi^{(j)}_h}\!\!+\!\!\E \norm{\nabla f\left(\frac{{\mX^{(j)}}\1}{M}\right)\1^\top}^2\right)\frac{\sqrt{{\delta(\mathbf{W})}}^{k-j-1}}{1-\sqrt{{\delta(\mathbf{W})}}}\!\!+\!\!\frac{9n\beta^2}{(1-\sqrt{{\delta(\mathbf{W})}})^2}.\label{termT2}
		\end{align}
	\end{small}
	
	We then bound the last term on the RHS of \eqref{Qk_bound1}:
	\begin{small}
		\begin{align}
			&\E\norm{\sum_{j=0}^{t-1}\tilde{\mE}^{(j)}\left(\frac{\1}{M}-{\mW^{t-j-1}}{\ee_i}\right)}^2\nonumber\\
			=&\sum_{j=0}^{t-1}\E\norm{\tilde{\mE}^{(j)}\left(\frac{\1}{M}-{\mW^{t-j-1}}{\ee_i}\right)}^2+\sum_{j\neq j'}^{k-1}\E\bigg\langle \tilde{\mE}^{(j)}\left(\frac{\1}{M}-{\mW^{t-j-1}}{\ee_i}\right),\tilde{\mE}^{(j')}\left(\frac{\1}{M}-\mW^{t-j'-1}{\ee_i}\right)\bigg\rangle,\nonumber\\
			\leq&\sum_{j=0}^{t-1}\E\norm{\tilde{\mE}^{(j)}}^2\norm{\frac{\1}{M}-{\mW^{t-j-1}}{\ee_i}}^2+\sum_{j\neq j'}^{k-1}\E\norm{\tilde{\mE}^{(j)}}\norm{\frac{\1}{M}-{\mW^{t-j-1}}{\ee_i}}\norm{\tilde{\mE}^{(j')}}\norm{\frac{\1}{M}-\mW^{t-j'-1}{\ee_i}},\nonumber\\
			\leq&\sum_{j=0}^{t-1}\E\norm{\tilde{\mE}^{(j)}}^2{\delta(\mathbf{W})}^{k-j-1}+\sum_{j\neq j'}^{k-1}\E\left(\frac{\norm{\tilde{\mE}^{(j)}}^2}{2}+\frac{\norm{\tilde{\mE}^{(j')}}^2}{2}\right)\norm{\frac{\1}{M}-{\mW^{t-j-1}}{\ee_i}}\norm{\frac{\1}{M}-\mW^{t-j'-1}{\ee_i}},\nonumber\\
			\overset{(a)}{\leq}&\sum_{j=0}^{t-1}\E\norm{\tilde{\mE}^{(j)}}^2{\delta(\mathbf{W})}^{k-j-1}+\sum_{j\neq j'}^{k-1}\E\left(\frac{\norm{\tilde{\mE}^{(j)}}^2}{2}\!\!+\!\!\frac{\norm{\tilde{\mE}^{(j')}}^2}{2}\right){\delta(\mathbf{W})}^{k-\frac{j+j'}{2}-1},\nonumber\\
			\leq&\sum_{j=0}^{t-1}\E\norm{\tilde{\mE}^{(j)}}^2\!\!\!\!{\delta(\mathbf{W})}^{k-j-1}\!\!\!+\!\!\sum_{j\neq j'}^{k-1}\E\norm{\tilde{\mE}^{(j)}}^2\!\!\!{\delta(\mathbf{W})}^{k-\frac{j+j'}{2}-1} \label{termT3}
			%	\leq&\sum_{j=0}^{t-1}\E\norm{\tilde{\mE}^{(j)}}^2{\delta(\mathbf{W})}^{k-j-1}+2\sum_{j=0}^{t-1}\E\norm{\tilde{\mE}^{(j)}}^2\sum_{j'=j+1}^{k-1}{\delta(\mathbf{W})}^{k-\frac{j+j'}{2}-1}\nonumber\\
			%	\leq&\sum_{j=0}^{t-1}\E\norm{\tilde{\mE}^{(j)}}^2{\delta(\mathbf{W})}^{k-j-1}+\sum_{j=0}^{t-1}\E\norm{\tilde{\mE}^{(j)}}^2\frac{2\sqrt{{\delta(\mathbf{W})}}^{k-j-1}}{1-\sqrt{{\delta(\mathbf{W})}}}\nonumber\\
			\!\!\!\!\leq\!\sum_{j=0}^{t-1}\E\norm{\tilde{\mE}^{(j)}}^2\left({\delta(\mathbf{W})}^{k-j-1}\!\!+\!\!\frac{2\sqrt{{\delta(\mathbf{W})}}^{k-j-1}}{1-\sqrt{{\delta(\mathbf{W})}}}\!\!\right)\!\!,
		\end{align}
	\end{small}where $(a)$ follows from {Lemma~\ref{lemma_Wmatrix}}. Plugging \eqref{termT1}, \eqref{termT2} and \eqref{termT3} back to \eqref{Qk_bound1}, we obtain the bound for $\Xi^{(t)}_i$:
	\begin{small}
		\begin{align}
			\Xi^{(t)}_i
			%\leq&9\lambda^2\sum_{j=0}^{t-1}\sum_{h=1}^{M}\E \omega^2	{\Xi^{(j)}_h}\norm{\left(\frac{\1}{M}-{\mW^{t-j-1}}{\ee_i}\right)}^2+9\lambda^2\sum_{j=0}^{t-1}\E \norm{\nabla f\left(\frac{{\mX^{(j)}}\1}{M}\right)\1^\top}^2\norm{\left(\frac{\1}{M}-{\mW^{t-j-1}}{\ee_i}\right)}^2\nonumber\\&+18\lambda^2\sum_{j=0}^{t-1}\left(\sum_{h=1}^{M}\E \omega^2	{\Xi^{(j)}_h}+\E \norm{\nabla f\left(\frac{{\mX^{(j)}}\1}{M}\right)\1^\top}^2\right)\frac{\sqrt{{\delta(\mathbf{W})}}^{k-j-1}}{1-\sqrt{{\delta(\mathbf{W})}}}+\frac{27\lambda^2n\beta^2}{(1-\sqrt{{\delta(\mathbf{W})}})^2}\nonumber\\&+3\lambda^2\sum_{j=0}^{t-1}\E\norm{\tilde{\mE}^{(j)}}^2\left({\delta(\mathbf{W})}^{k-j-1}+\frac{2\sqrt{{\delta(\mathbf{W})}}^{k-j-1}}{1-\sqrt{{\delta(\mathbf{W})}}}\right)+\frac{3\lambda^2n\alpha^2}{1-{\delta(\mathbf{W})}}\nonumber\\
			%\leq&\frac{3\lambda^2n\alpha^2}{1-{\delta(\mathbf{W})}}+\frac{27\lambda^2n\beta^2}{(1-\sqrt{{\delta(\mathbf{W})}})^2}+9\lambda^2\sum_{j=0}^{t-1}\sum_{h=1}^{M}\E \omega^2	{\Xi^{(j)}_h}{\delta(\mathbf{W})}^{k-j-1}\\
			%&+9\lambda^2\sum_{j=0}^{t-1}\E \norm{\nabla f\left(\frac{{\mX^{(j)}}\1}{M}\right)\1^\top}^2{\delta(\mathbf{W})}^{k-j-1}\\&+18\lambda^2\sum_{j=0}^{t-1}\left(\sum_{h=1}^{M}\E \omega^2	{\Xi^{(j)}_h}+\E \norm{\nabla f\left(\frac{{\mX^{(j)}}\1}{M}\right)\1^\top}^2\right)\frac{\sqrt{{\delta(\mathbf{W})}}^{k-j-1}}{1-\sqrt{{\delta(\mathbf{W})}}}\\&+3\lambda^2\sum_{j=0}^{t-1}\E\norm{\tilde{\mE}^{(j)}}^2\left({\delta(\mathbf{W})}^{k-j-1}+\frac{2\sqrt{{\delta(\mathbf{W})}}^{k-j-1}}{1-\sqrt{{\delta(\mathbf{W})}}}\right)\\
			\!\!\leq&9\lambda^2\sum_{j=0}^{t-1}\E \norm{\nabla f\left(\frac{{\mX^{(j)}}\1}{M}\right)\1^\top}^2\left({\delta(\mathbf{W})}^{k-j-1}+\frac{2\sqrt{{\delta(\mathbf{W})}}^{k-j-1}}{1-\sqrt{{\delta(\mathbf{W})}}}\right)+\!\!9\lambda^2\sum_{j=0}^{t-1}\sum_{h=1}^{M}\E \omega^2	{\Xi^{(j)}_h}\left({\delta(\mathbf{W})}^{k-j-1}\!\!+\!\!\frac{2\sqrt{{\delta(\mathbf{W})}}^{k-j-1}}{1-\sqrt{{\delta(\mathbf{W})}}}\right)\nonumber\\
			&+3\lambda^2\sum_{j=0}^{t-1}\E\norm{\tilde{\mE}^{(j)}}^2\left({\delta(\mathbf{W})}^{k-j-1}+\frac{2\sqrt{{\delta(\mathbf{W})}}^{k-j-1}}{1-\sqrt{{\delta(\mathbf{W})}}}\right)+\frac{3\lambda^2n\alpha^2}{1-{\delta(\mathbf{W})}}\!\!+\!\!\frac{27\lambda^2n\beta^2}{(1-\sqrt{{\delta(\mathbf{W})}})^2}.
		\end{align}
	\end{small}Therefore, we have the following bound:
	\begin{small}
		\begin{align}
			\frac{1}{M}\sum_{i=1}^{M}&{\E\norm{\frac{\mX^{(t)}\1}{M}-\mX^{(t)}\ee_i}^2}
			\leq\frac{3\lambda^2M\alpha^2}{1-{\delta(\mathbf{W})}}\!\!+\!\!\frac{27\lambda^2M\beta^2}{(1-\sqrt{{\delta(\mathbf{W})}})^2}\!\!+\!\!9\lambda^2\sum_{j=0}^{t-1}\E \norm{\nabla f\left(\frac{{\mX^{(j)}}\1}{M}\right)\1^\top}^2\left({\delta(\mathbf{W})}^{k-j-1}\!\!+\!\!\frac{2\sqrt{{\delta(\mathbf{W})}}^{k-j-1}}{1-\sqrt{{\delta(\mathbf{W})}}}\right)\notag\\
			+9\lambda^2\omega^2&\sum_{j=0}^{t-1} \sum_{i=1}^{M}{\E\norm{\frac{\mX^{(j)}\1}{M}-\mX^{(j)}\ee_i}^2}\left({\delta(\mathbf{W})}^{k-j-1}\!\!+\!\!\frac{2\sqrt{{\delta(\mathbf{W})}}^{k-j-1}}{1-\sqrt{{\delta(\mathbf{W})}}}\right)\!\!+\!\!3\lambda^2\sum_{j=0}^{t-1}\E\norm{\tilde{\mE}^{(j)}}^2\left({\delta(\mathbf{W})}^{k-j-1}\!\!\!+\!\!\frac{2\sqrt{{\delta(\mathbf{W})}}^{k-j-1}}{1-\sqrt{{\delta(\mathbf{W})}}}\right)\!.\label{consen_bound}
		\end{align}
	\end{small}
	
	Note that the agreement error $\frac{1}{M}\sum_{i=1}^{M}{\E\norm{\frac{\mX^{(t)}\1}{M}-\mX^{(t)}\ee_i}^2}$ appears on both sides of the inequality. Summing \eqref{consen_bound} from $t=0$ to $T-1$, by rearranging the summation and relaxing the inequality,  we obtain the final bound of the agreement error as
	\begin{small}
		\begin{align}
			&\frac{1}{M}	\sum_{t=0}^{T-1} \sum_{i=1}^{M}{\E\norm{\frac{\mX^{(t)}\1}{M}-\mX^{(t)}\ee_i}^2}\!\!\leq\!\!\frac{3\lambda^2M\alpha^2}{(1-{\delta(\mathbf{W})})\left(1-\frac{27}{(1-\sqrt{{\delta(\mathbf{W})}})^2}M\lambda^2\omega^2\right)}T\!\!+\!\!\frac{27\lambda^2M\beta^2}{(1-\sqrt{{\delta(\mathbf{W})}})^2\left(1-\frac{27}{(1-\sqrt{{\delta(\mathbf{W})}})^2}M\lambda^2\omega^2\right)}T\notag\\&\!\!+\!\!\frac{27\lambda^2}{(1-\sqrt{{\delta(\mathbf{W})}})^2\left(1-\frac{27}{(1-\sqrt{{\delta(\mathbf{W})}})^2}M\lambda^2\omega^2\right)}\!\!\sum_{t=0}^{T-1}\!\E\! \norm{\nabla f\left(\frac{\mX^{(t)}\1}{M}\right)\1^\top}^2\!\!\!+\!\!\frac{9\lambda^2}{(1-\sqrt{{\delta(\mathbf{W})}})^2\left(1-\frac{27}{(1-\sqrt{{\delta(\mathbf{W})}})^2}M\lambda^2\omega^2\right)}\!\!\sum_{t=0}^{T-1}\!\!\E\!\norm{\tilde{\mE}^{(t)}}^2\!\!.\label{sum_iter_con}
					\vspace{-3em}
		\end{align}
	\end{small}

	Finally, we sum the inequality \eqref{ini_iter} from $t=0$ to $T-1$ while using \eqref{24} and \eqref{sum_iter_con}, which completes the proof of {Proposition~\ref{theorem1}}.
	%It is worth mentioning that here we found the communication error $\E\norm{\tilde{\mE}^{(t)}}^2$ has a detrimental effect on the model consensus process, which also accumulates over training rounds.
	%Substitute \eqref{consen_bound} and \eqref{24} into \eqref{ini_iter} and then sum it from $t=0$ to $T-1$ we have
	%\begin{small}
	%\begin{align}
	%	&\left(\frac{1}{2}-{27M\lambda^2}P\right)\sum_{t=0}^{T-1}\E\norm{\nabla f\left(\frac{\mX^{(t)}\1}{M}\right)}^2\leq \frac{f(\frac{\mX^{(0)}\1}{M})-{f^{\star}}}{\lambda }+\frac{\alpha^2T}{M}\\&
	%	+\frac{1}{\lambda^2}\sum_{t=0}^{T-1}\E\norm{\frac{\mE^{(t)}\1}{M}}^2+
	%	9P\sum_{t=0}^{T-1}\E\norm{\mE^{(t)}}^2+(3M\alpha^2\lambda^2+27M\beta^2\lambda^2)PT
	%\end{align}
	%\end{small}where $P\triangleq\frac{\omega^2}{(1-\sqrt{{\delta(\mathbf{W})}})^2-27M\lambda^2\omega^2}$ and ${f^{\star}}$ denotes the optimal value of loss function. This completes the proof of \emph{Theorem~\ref{lemma_Wmatrix}}.
		\vspace{-1em}
	\section{Proof of Proposition~\ref{Corollary:MSE}}\label{app_c}
	The term $\E\norm{{\mE}^{(t)}}^2_F$  is given by
	\begin{align}
		\E\norm{{\mE}^{(t)}}^2_F=\E\sum_{i=1}^{M}\sum_{d=1}^{D}\left[\left|x_{i}^{(t + \frac{1}{2})}[d]-{\hat x}_{i}^{(t + \frac{1}{2})}[d]\right|^2\right]\label{error_term1}
	\end{align}
	
	By substituting \eqref{free_aggre}, \eqref{nomalize}, \eqref{complex1}, \eqref{wholechannel} and  \eqref{rece1} into \eqref{error_term1}, we obtain
	\begin{align}
		\E\norm{\mE^{(t)}}^2_F&\!=\!\sum_{i=1}^{M}\sum_{c=1}^{C}\E\left[\left|\sum_{j\in \cM_i}(w_{ij}r^{(t)}_j[c]v^{(t)}_j-r^{(t)}_j[c]({\ff^{(t)}_{i}})^{\mathrm{H}}\mathbf{H}_{\langle i,j\rangle}^{(t)}\uu^{(t)}_j)\right|^2\!\!+\!\!\left|({\ff^{(t)}_{i}})^{\mathrm{H}}\mathbf{n}_{k,i}[c]\right|^2\right]\label{error1}
	\end{align}
	where the first term on the RHS  represents the misalignment error, and the other represents the error due to channel noise. Similarly, the  term $\E\norm{{\mE^{(t)}\1}}^2$ can be expressed as
	\begin{align}
		\E\norm{{\mE^{(t)}\1}}^2&\!\!=\!\!\sum_{c=1}^{C}\E\!\!\left[\left|\sum_{i=1}^{M}\sum_{j\in\cM_i}\left(w_{ij}r^{(t)}_j[c]v^{(t)}_j\!\!-\!r^{(t)}_j[c]({\ff^{(t)}_{i}})^{\mathrm{H}}\mathbf{H}_{\langle i,j\rangle}^{(t)}\uu^{(t)}_j\right)\right|\!\!+\!\!\left|\sum_{i=1}^{M}({\ff^{(t)}_{i}})^{\mathrm{H}}\nn_{k,i}[c]\right|^2\!\right]\label{error2}
	\end{align}
	Based on the correlation assumption in \eqref{corr}, we have $\E[r^{(t)}_i[c]^{\ast}r^{(t)}_i[c]]\!=\!2,\forall i,\forall c$ and $\E[r^{(t)}_i[c_1]^{\ast}r^{(t)}_j[c_2]]\!\!=\!0,\!\forall (i \neq j) \!\cup\! \forall (c_1 \neq c_2)\!$. \!By using them, we expand \eqref{error1} and \eqref{error2} and finally obtain {Proposition \ref{Corollary:MSE}}.
		\vspace{-1.5em}
	\section{Proof of Proposition~\ref{sub_problem21}}\label{app_sub_problem21}
	In {Proposition \ref{Corollary:MSE}},  $\E\norm{\mE}^2_F$ is clearly a convex function with respect to $\mW$. Besides, $\E\norm{{\mE\1}}^2$ is also a convex function of $\mW$ by noting
		\vspace{-0.5em}
	\begin{align}
		\sum_{p=1}^{M}\sum_{i,j\in \cM_p}^{}w_{ip}w_{jp}{v_{p}}^2=\sum_{p=1}^{M}v_{p}^2\bigg(\sum_{i\in \cM_p}^{n}w_{ip}\bigg)^2.
	\end{align}
	Therefore, the objective function in P7 is convex. Additionally, constraint \eqref{sub21_cons} is an affine constraint with respect to $\mW$. Then we only need to prove the convexity of the eigenvalue-related constraint ${\delta(\mathbf{W})}\leq \hat{\delta}$.  
	
	For  a symmetric matrix $\mX \in \R^{n\times n}$, using the variational characterization, the sum of the $k$ largest squared eigenvalues can be expressed as
				\vspace{-0.5em}
	\begin{align}
		\sum_{i=1}^{k}\lambda_i(\mX^2)=&\!\!\sup_{\vv_1,\dots\vv_k}\!\!\Big\{\sum_{i=1}^{k}{\vv_i}^T\mX^2\vv_i,\bigg|{\vv_i}^\mathrm{T}{\vv_j}=\text{\slcurly{$1,~i= j$ \notag\\ $0,~i\neq j$} }
		\Big\}\\=&\sup_{}\left\{\Tr\left({\mV^T\mX^2\mV}\right)|\mV^T\mV=I\right\}\notag\\
		=&\sup\left\{\Tr\left({(\mX \mV)^T(\mX \mV)}\right)|\mV^T\mV=I\right\}\notag\\
		=&\sup\left\{{\norm{\mX \mV}}_F^2|\mV^T\mV=I\right\},\label{sup_V}
	\end{align}
	where $\mV\triangleq[\vv_1,\vv_2,\dots\vv_k] \in\R^{n\times k}$. Note that \eqref{sup_V} is a  point-wise supremum of convex functions; hence, it is a convex function of ${\mX}$\cite{boyd_vandenberghe_2004}. 
	
	Note that we in fact have ${{\delta(\mathbf{W})}}=\sum_{i=1}^{2}\lambda_i(\mW^2)-1$ for symmetric doubly stochastic matrix $\mW$. It then follows that the constraint ${\delta(\mathbf{W})} \leq \hat{\delta}$ is convex. Hence, the problem P7 is convex.
	
	\bibliographystyle{IEEEtran}
	\bibliography{IEEEabrv,Dec-fl}

% Generated by IEEEtran.bst, version: 1.14 (2015/08/26)
\begin{thebibliography}{10}
\providecommand{\url}[1]{#1}
\csname url@samestyle\endcsname
\providecommand{\newblock}{\relax}
\providecommand{\bibinfo}[2]{#2}
\providecommand{\BIBentrySTDinterwordspacing}{\spaceskip=0pt\relax}
\providecommand{\BIBentryALTinterwordstretchfactor}{4}
\providecommand{\BIBentryALTinterwordspacing}{\spaceskip=\fontdimen2\font plus
\BIBentryALTinterwordstretchfactor\fontdimen3\font minus
  \fontdimen4\font\relax}
\providecommand{\BIBforeignlanguage}[2]{{%
\expandafter\ifx\csname l@#1\endcsname\relax
\typeout{** WARNING: IEEEtran.bst: No hyphenation pattern has been}%
\typeout{** loaded for the language `#1'. Using the pattern for}%
\typeout{** the default language instead.}%
\else
\language=\csname l@#1\endcsname
\fi
#2}}
\providecommand{\BIBdecl}{\relax}
\BIBdecl

\bibitem{he2016deep}
K.~He, X.~Zhang, S.~Ren, and J.~Sun, ``Deep residual learning for image
  recognition,'' in \emph{Proc. IEEE CVPR}, 2016, pp. 770--778.

\bibitem{young2018recent}
T.~Young, D.~Hazarika, S.~Poria, and E.~Cambria, ``Recent trends in deep
  learning based natural language processing,'' \emph{IEEE Comput. Intell.
  Mag.}, vol.~13, no.~3, pp. 55--75, 2018.

\bibitem{konevcny2016federated}
J.~Kone{\v{c}}n{\`y}, H.~B. McMahan, D.~Ramage, and P.~Richt{\'a}rik,
  ``Federated optimization: Distributed machine learning for on-device
  intelligence,'' \emph{arXiv preprint arXiv:1610.02527}, 2016.

\bibitem{savazzi2021opportunities}
S.~Savazzi, M.~Nicoli, M.~Bennis, S.~Kianoush, and L.~Barbieri, ``Opportunities
  of federated learning in connected, cooperative, and automated industrial
  systems,'' \emph{IEEE Commun. Mag.}, vol.~59, no.~2, pp. 16--21, 2021.

\bibitem{tsitsiklis1984problems}
J.~N. Tsitsiklis, ``Problems in decentralized decision making and
  computation.'' Massachusetts Inst. of Tech. Cambridge Lab for Information and
  Decision Systems, Tech. Rep., 1984.

\bibitem{wei2012distributed}
E.~Wei and A.~Ozdaglar, ``Distributed alternating direction method of
  multipliers,'' in \emph{Proc. IEEE CDC}, 2012, pp. 5445--5450.

\bibitem{duchi2011dual}
J.~C. Duchi, A.~Agarwal, and M.~J. Wainwright, ``Dual averaging for distributed
  optimization: Convergence analysis and network scaling,'' \emph{IEEE Trans.
  Automat. Contr.}, vol.~57, no.~3, pp. 592--606, 2011.

\bibitem{nedic2009distributed}
A.~Nedic, A.~Olshevsky, A.~Ozdaglar, and J.~N. Tsitsiklis, ``On distributed
  averaging algorithms and quantization effects,'' \emph{IEEE Trans. Automat.
  Contr.}, vol.~54, no.~11, pp. 2506--2517, 2009.

\bibitem{ram2008distributed}
S.~S. Ram, A.~Nedich, and V.~V. Veeravalli, ``Distributed stochastic
  subgradient projection algorithms for convex optimization,'' \emph{arXiv
  preprint arXiv:0811.2595}, 2008.

\bibitem{lian2017can}
X.~Lian, C.~Zhang, H.~Zhang, C.-J. Hsieh, W.~Zhang, and J.~Liu, ``Can
  decentralized algorithms outperform centralized algorithms? {A} case study
  for decentralized parallel stochastic gradient descent,'' \emph{in Proc.
  NeurIPS}, vol.~30, 2017.

\bibitem{basu2019qsparse}
D.~Basu, D.~Data, C.~Karakus, and S.~Diggavi, ``Qsparse-local-{SGD}:
  Distributed {SGD} with quantization, sparsification and local computations,''
  \emph{in Proc. NeurIPS}, vol.~32, 2019.

\bibitem{koloskova2020unified}
A.~Koloskova, N.~Loizou, S.~Boreiri, M.~Jaggi, and S.~Stich, ``A unified theory
  of decentralized {SGD} with changing topology and local updates,'' in
  \emph{Proc. ICML}.\hskip 1em plus 0.5em minus 0.4em\relax PMLR, 2020, pp.
  5381--5393.

\bibitem{ye2022decentralized}
H.~Ye, L.~Liang, and G.~Y. Li, ``Decentralized federated learning with
  unreliable communications,'' \emph{IEEE J. Sel. Top. Signal Process.},
  vol.~16, no.~3, pp. 487--500, 2022.

\bibitem{nazer2007computation}
B.~Nazer and M.~Gastpar, ``Computation over multiple-access channels,''
  \emph{IEEE Trans. Inf. Theory}, vol.~53, no.~10, pp. 3498--3516, 2007.

\bibitem{xing2020decentralized}
H.~Xing, O.~Simeone, and S.~Bi, ``Decentralized federated learning via {SGD}
  over wireless {D2D} networks,'' in \emph{Proc. IEEE SPAWC}, 2020, pp. 1--5.

\bibitem{shi2021over}
Y.~Shi, Y.~Zhou, and Y.~Shi, ``Over-the-air decentralized federated learning,''
  in \emph{Proc. IEEE ISIT}, 2021, pp. 455--460.

\bibitem{lin2022distributed}
Z.~Lin, Y.~Gong, and K.~Huang, ``Distributed over-the-air computing for fast
  distributed optimization: Beamforming design and convergence analysis,''
  \emph{arXiv preprint arXiv:2204.06876}, 2022.

\bibitem{corne1999new}
D.~Corne, M.~Dorigo, F.~Glover, D.~Dasgupta, P.~Moscato, R.~Poli, and K.~V.
  Price, \emph{New Ideas in Optimization}.\hskip 1em plus 0.5em minus
  0.4em\relax McGraw-Hill Ltd., UK, 1999.

\bibitem{boyd_vandenberghe_2004}
S.~Boyd and L.~Vandenberghe, \emph{Convex Optimization}.\hskip 1em plus 0.5em
  minus 0.4em\relax Cambridge University Press, 2004.

\bibitem{zinkevich2010parallelized}
M.~Zinkevich, M.~Weimer, L.~Li, and A.~Smola, ``Parallelized stochastic
  gradient descent,'' \emph{in Proc. NeurIPS}, vol.~23, 2010.

\bibitem{koloskova2019decentralized}
A.~Koloskova, S.~Stich, and M.~Jaggi, ``Decentralized stochastic optimization
  and gossip algorithms with compressed communication,'' in \emph{Proc.
  ICML}.\hskip 1em plus 0.5em minus 0.4em\relax PMLR, 2019, pp. 3478--3487.

\bibitem{sabharwal2014band}
A.~Sabharwal, P.~Schniter, D.~Guo, D.~W. Bliss, S.~Rangarajan, and R.~Wichman,
  ``In-band full-duplex wireless: Challenges and opportunities,'' \emph{IEEE J.
  Sel. Areas Commun.}, vol.~32, no.~9, pp. 1637--1652, 2014.

\bibitem{zhu2018mimo}
G.~Zhu and K.~Huang, ``{MIMO} over-the-air computation for high-mobility
  multimodal sensing,'' \emph{IEEE Internet Things J.}, vol.~6, no.~4, pp.
  6089--6103, 2018.

\bibitem{wen2014channel}
C.-K. Wen, S.~Jin, K.-K. Wong, J.-C. Chen, and P.~Ting, ``Channel estimation
  for massive {MIMO} using {Gaussian}-mixture {Bayesian} learning,'' \emph{IEEE
  Trans. Wireless Commun.}, vol.~14, no.~3, pp. 1356--1368, 2014.

\bibitem{romer2001time}
K.~R{\"o}mer, ``Time synchronization in ad hoc networks,'' in \emph{Proc.
  MobiHoc, ACM}, 2001, pp. 173--182.

\bibitem{sandell1995timing}
M.~Sandell, J.~v.~d. Beek, and P.~O. B{\"o}rjesson, ``Timing and frequency
  synchronization in {OFDM} systems using the cyclic prefix,'' in \emph{Proc.
  International Symposium on Synchronization: 14/12/1995-15/12/1995}.\hskip 1em
  plus 0.5em minus 0.4em\relax Shannon Foundation, 1995, pp. 16--19.

\bibitem{liu2021reconfigurable}
H.~Liu, X.~Yuan, and Y.-J.~A. Zhang, ``Reconfigurable intelligent surface
  enabled federated learning: A unified communication-learning design
  approach,'' \emph{IEEE Trans. Wireless Commun.}, vol.~20, no.~11, pp.
  7595--7609, 2021.

\bibitem{lin2021deploying}
Z.~Lin, X.~Li, V.~K. Lau, Y.~Gong, and K.~Huang, ``Deploying federated learning
  in large-scale cellular networks: Spatial convergence analysis,'' \emph{IEEE
  Trans. Wireless Commun.}, vol.~21, no.~3, pp. 1542--1556, 2021.

\bibitem{wang2018cooperative}
J.~Wang and G.~Joshi, ``Cooperative {SGD}: A unified framework for the design
  and analysis of communication-efficient {SGD} algorithms,'' \emph{arXiv
  preprint arXiv:1808.07576}, 2018.

\bibitem{li2021decentralized}
X.~Li, Y.~Xu, J.~H. Wang, X.~Wang, and J.~Lui, ``Decentralized stochastic
  proximal gradient descent with variance reduction over time-varying
  networks,'' \emph{arXiv preprint arXiv:2112.10389}, 2021.

\bibitem{amiri2020federated}
M.~M. Amiri and D.~G{\"u}nd{\"u}z, ``Federated learning over wireless fading
  channels,'' \emph{IEEE Trans. Wireless Commun.}, vol.~19, no.~5, pp.
  3546--3557, 2020.

\bibitem{mcmahan2017communication}
B.~McMahan, E.~Moore, D.~Ramage, S.~Hampson, and B.~A. y~Arcas,
  ``Communication-efficient learning of deep networks from decentralized
  data,'' in \emph{Proc. Artificial Intelligence and Statistics}.\hskip 1em
  plus 0.5em minus 0.4em\relax PMLR, 2017, pp. 1273--1282.

\bibitem{yang2020federated}
K.~Yang, T.~Jiang, Y.~Shi, and Z.~Ding, ``Federated learning via over-the-air
  computation,'' \emph{IEEE Trans. Wireless Commun.}, vol.~19, no.~3, pp.
  2022--2035, 2020.

\bibitem{deng2012mnist}
L.~Deng, ``The {MNIST} database of handwritten digit images for machine
  learning research [best of the web],'' \emph{IEEE Signal Process. Mag.},
  vol.~29, no.~6, pp. 141--142, 2012.

\bibitem{diamond2016cvxpy}
S.~Diamond and S.~Boyd, ``{CVXPY}: A python-embedded modeling language for
  convex optimization,'' \emph{J. Mach. Learn. Res.}, vol.~17, no.~1, pp.
  2909--2913, 2016.

\bibitem{tran2019federated}
N.~H. Tran, W.~Bao, A.~Zomaya, M.~N. Nguyen, and C.~S. Hong, ``Federated
  learning over wireless networks: Optimization model design and analysis,'' in
  \emph{Proc. IEEE INFOCOM}, 2019, pp. 1387--1395.

\bibitem{9955571}
C.~Zhong, H.~Yang, and X.~Yuan, ``Over-the-air federated multi-task learning
  over mimo multiple access channels,'' \emph{IEEE Trans. Wireless Commun.},
  vol.~22, no.~6, pp. 3853--3868, 2023.

\end{thebibliography}
\end{document}